\documentclass[11pt]{article}

\usepackage[T1]{fontenc}
\usepackage{lmodern}
\usepackage[margin=1in]{geometry}
\usepackage{amsmath,amssymb,amsthm,mathtools}
\usepackage{nicefrac}
\usepackage{microtype}
\usepackage{needspace}
\usepackage{enumitem}
\usepackage[hidelinks]{hyperref}

\newtheorem{theorem}{Theorem}
\newtheorem{lemma}[theorem]{Lemma}

\theoremstyle{definition}
\newtheorem{definition}[theorem]{Definition}
\newtheorem{algorithm}[theorem]{Algorithm}
\theoremstyle{remark}

\newcommand{\OPT}{\operatorname{OPT}}
\newcommand{\E}{\mathbb{E}}
\newcommand{\Prb}{\mathbb{P}}

\title{Vertex Cover Problem with Advice}
\author{
  Lehilton Lelis Chaves Pedrosa \& Lucas de Oliveira Silva\\[0.5em]
  State University of Campinas (Unicamp) - Brazil\\
  \texttt{\{lehilton,lucas.oliveira.silva\}@ic.unicamp.br}
}
\date{}
\hypersetup{pdftitle={Vertex Cover Problem with Advice}}

\begin{document}

\maketitle

\begin{abstract}
  We study the minimum Vertex Cover problem under two non-adaptive offline advice models, where predictions about a fixed optimum solution are provided once as part of the input.
  These forms of prediction were introduced independently by Cohen-Addad, d'Orsi, Gupta, Lee, and Panigrahi (2024) and by Ghoshal, Makarychev, and Makarychev (2025).

  In Partial Predictions, independently revealed vertices come with correct membership labels.
  For every sufficiently small fixed $\varepsilon>0$, we give a randomized polynomial-time algorithm that has expected approximation ratio at most
  \[
    2-(2-o(1))\frac{\log\log(\nicefrac{1}{\varepsilon})}{\log(\nicefrac{1}{\varepsilon})}.
  \]

  In Noisy Predictions, every vertex instead receives a mutually independent noisy membership label of bias $\varepsilon$.
  For every sufficiently small fixed $\varepsilon>0$, we give a randomized polynomial-time algorithm that has expected approximation ratio at most
  \[
    2-(1-o(1))\frac{\log\log(\nicefrac{1}{\varepsilon})}{\log(\nicefrac{1}{\varepsilon})}.
  \]
  Our Noisy Predictions bound matches the leading asymptotic improvement below $2$ obtained by Aamand, Chen, Gollapudi, Silwal, and Wu (2025), despite using only one noisy label per vertex rather than a separate independent label for each endpoint of each incident edge.
  Both our bounds hold as $\varepsilon$ goes to $0$ and are strictly below $2$, the optimal approximation threshold for minimum Vertex Cover without advice under the Unique Games Conjecture, as shown by Khot and Regev (2008).
\end{abstract}

\section{Introduction}

A growing body of research has explored when \emph{learning-augmented} algorithms, which receive additional predictions from (pre-trained) machine-learning models, can overcome worst-case barriers for classical hard computational problems.
Cohen-Addad~et~al.~\cite{CdGLP24} studied Max Cut under \emph{Partial Predictions}, which reveal correct labels for randomly selected vertices, and \emph{Noisy Predictions}, which provide one noisy label per vertex; they improved the Max Cut guarantee within these models and also studied Max-2-CSPs.
Independently and concurrently, Ghoshal, Makarychev, and Makarychev~\cite{GMM25} studied corresponding advice models for CSPs, formulated with mutually independent advice coordinates.

We consider the minimum Vertex Cover problem, which seeks the smallest set of vertices that covers every edge.
This problem admits a polynomial-time $2$-approximation, while under the Unique Games Conjecture, any fixed factor below $2$ is NP-hard~\cite{KhotRegev08}.
We investigate how predictions about membership in a fixed minimum cover can improve the approximation guarantee.

For every sufficiently small fixed advice parameter $\varepsilon>0$, Theorems~\ref{thm:main-upper} and~\ref{thm:noisy-upper} give randomized polynomial-time algorithms that always return a vertex cover and have expected approximation ratios at most
\[
  2-(2-o(1))\frac{\log\log(\nicefrac{1}{\varepsilon})}{\log(\nicefrac{1}{\varepsilon})}
  \qquad\text{and}\qquad
  2-(1-o(1))\frac{\log\log(\nicefrac{1}{\varepsilon})}{\log(\nicefrac{1}{\varepsilon})},
\]
for Partial Predictions and Noisy Predictions, respectively, as $\varepsilon\to0$.
Here $\varepsilon$ is the reveal probability or noisy-label bias, respectively, under the conventions formalized below, and the expectations include both advice and algorithmic randomness.
Our noisy guarantee matches the leading asymptotic improvement below $2$ implied by the analysis of Aamand, Chen, Gollapudi, Silwal, and Wu~\cite{ACGSW25}, after rescaling the bias parameter by a constant.
However, our model is arguably weaker.
Indeed, while their model supplies an independent prediction for each edge endpoint, so a degree-$d$ vertex receives $d$ predictions about its membership, our model supplies one prediction per vertex.

Both algorithms preprocess the graph to obtain a bounded maximum degree, then combine the already selected vertices with a cover of the residual graph computed by Halperin's approximation algorithm~\cite{Halperin02}.
With partial predictions, we first select revealed cover vertices and all neighbors of revealed vertices outside the cover.
Hence, every vertex selected at this stage belongs to the fixed optimum.
We then select all vertices above a degree threshold in the residual graph, bounding the expected number selected outside the optimum using the certified selections.
With noisy predictions, we process decreasing degree thresholds and use randomized signs derived from neighbors' predictions to decide whether to select a high-degree vertex or its current neighbors.
Each residual graph depends on signs used in earlier iterations, so independence of the original labels alone does not justify concentration on the surviving vertices.
However, our construction makes the signs for the current iteration independent across vertices conditional on the earlier iteration signs, with controlled bias toward their true labels.
Both analyses bound the cost of preprocessing relative to the number of optimum vertices it removes, preserving the leading improvement from the bounded-degree approximation.
The respective final degree thresholds $\Theta(\nicefrac{1}{\varepsilon})$ and $\Theta(\nicefrac{\log(\nicefrac{1}{\varepsilon})}{\varepsilon^2})$ yield the leading coefficients $2$ and $1$ in our guarantees when substituted into Halperin's bound.

\subsection{Models and notation}

Throughout, $\log$ denotes the base-two logarithm.

For a graph $G=(V, E)$, let $\OPT(G)$ be the size of a minimum vertex cover, let $d_G(v)$ be the degree of $v$ in $G$, and let $\Delta(G)$ be the maximum degree.
Given $U\subseteq V(G)$, write $G-U$ for the subgraph induced by $V(G)\setminus U$.
For $A, B\subseteq V(G)$, write $E_G(A, B)$ for the set of edges with one endpoint in $A$ and the other in $B$.
For $U\subseteq V(G)$, let
\[
  N_G(U):=\{w\in V(G)\setminus U\mid\text{$w$ has a neighbor in $U$}\},
\]
and write $N_G(v):=N_G(\{v\})$ for the \emph{open neighborhood} of $v$ and $N_G[v]:=N_G(v)\cup\{v\}$ for its \emph{closed neighborhood}.

Let $G=(V, E)$ be a graph and put $n:=|V|$.
Fix a minimum vertex cover $C^*\subseteq V$ and set
\[
  k^*:=|C^*|=\OPT(G), \qquad\text{and}\qquad I^*:=V\setminus C^*.
\]
The set $C^*$ is used only to define the ground-truth advice and in the analysis; an algorithm does not receive $C^*$.
We assume $k^*\ge1$; if $k^*=0$, then $G$ has no edges and the empty set is trivially a minimum vertex cover.
Since $C^*$ is a vertex cover, $I^*$ is independent.
Let $x^*:=(x_v^*)_{v\in V}\in\{\pm1\}^V$ be the signed membership vector of $C^*$, with $x_v^*=1$ for $v\in C^*$ and $x_v^*=-1$ for $v\in I^*$.
For all algorithms below, write $\widehat C$ for its final output.

\begin{definition}[Partial Predictions]
  \label{def:subset-advice}
  Fix $\varepsilon\in(0,1]$.
  The algorithm receives a random set $S\subseteq V$ in which each vertex is revealed independently with probability $\varepsilon$, together with the correct labels $(x_v^*)_{v\in S}$.
\end{definition}

\begin{definition}[Noisy Predictions]
  \label{def:label-advice}
  Fix $\varepsilon\in(0,1]$.
  The algorithm receives an advice vector $\widetilde x:=(\widetilde x_v)_{v\in V}\in\{\pm1\}^V$, whose coordinates are mutually independent and satisfy
  \[
    \Prb[\widetilde x_v=x_v^*]=\frac{1+\varepsilon}{2},
    \qquad\text{and}\qquad
    \Prb[\widetilde x_v=-x_v^*]=\frac{1-\varepsilon}{2}.
  \]
\end{definition}

The algorithms below form families indexed by sufficiently small fixed advice parameter $\varepsilon>0$.
For simplicity, we state each family as a single algorithm with access to $\varepsilon$.
Its running time is polynomial in $n$, with constants and exponents that may depend on $\varepsilon$.
The approximation guarantees take the limit $\varepsilon\to0$ and hold for every input size $n$.

\paragraph{Independence convention.}

The independence assertions in Definitions~\ref{def:subset-advice} and~\ref{def:label-advice} are conditional on the fixed ground-truth vector $x^*$.
The algorithm's internal random coins are mutually independent and independent of the advice.
Random variables computed from the advice and internal coins may be dependent even though the underlying advice coordinates and coins are independent.

Both guarantees hold for every fixed choice of the ground-truth minimum cover $C^*$.
Unless stated otherwise, expectations include all advice randomness and all internal randomness of the algorithm under discussion.

\section{General Framework}

We use Halperin's bounded-degree approximation as a black box \cite{Halperin02}.
For all sufficiently large integers $\Delta$, there is a randomized polynomial-time algorithm $\mathcal B_\Delta$ and a number $\delta_\Delta\in(0,1)$ such that, whenever $\Delta(H)\le\Delta$, the algorithm returns a vertex cover of $H$ and
\[
  \E[|\mathcal B_\Delta(H)|]
  \le (2-\delta_\Delta)\cdot\OPT(H), \tag{1.1}\label{eq:halperin}
\]
where
\[
  \delta_\Delta
  =(2-o(1))\frac{\log\log\Delta}{\log\Delta}.
  \tag{1.2}\label{eq:halperin_delta}
\]
Here the $o(1)$ term tends to zero as $\Delta\to\infty$.
The expectation in \eqref{eq:halperin} is over the internal randomness of $\mathcal B_\Delta$.

\subsection{Progress--balance preprocessing}
\label{subsec:progress-balance}

Both of our algorithms for each type of advice begin with a \emph{preprocessing step}: they select vertices into their output and remove vertices from the graph, then invoke $\mathcal B_\Delta$ on the residual graph.
The following accounting statement uses the selected and removed vertices, together with a fixed optimum available only in the analysis.

\begin{lemma}
  \label{lem:master-accounting}
  Suppose a feasible preprocessing step permanently selects a set $A$ into the output, deletes vertices from $G$, and leaves a \emph{residual graph} $H$ with $\Delta(H)\le\Delta$.
  Assume that every edge removed is covered by $A$.
  Let $P^*$ be the number of vertices of the fixed cover $C^*$ deleted during the preprocessing step; we call $P^*$ the \emph{progress}.
  Set $B^*:=|A|-P^*$; we call $B^*$ the \emph{balance}.

  If the preprocessing step is followed by $\mathcal B_\Delta$ on $H$, then
  \[
    \E[|\widehat C|]
    \le
    (2-\delta_\Delta)\cdot k^*
    -(1-\delta_\Delta)\cdot\E[P^*]
    +\E[B^*].
    \tag{2}\label{eq:master-accounting}
  \]
\end{lemma}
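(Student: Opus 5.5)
The plan is to condition on the outcome of the preprocessing step (the advice and the preprocessing coins) and apply \eqref{eq:halperin} to the residual graph $H$. Since $\Delta(H)\le\Delta$, (1.1) applies to $H$. Feasibility is immediate: every removed edge is covered by $A$, and every edge of $H$ is covered by $\mathcal B_\Delta(H)$. So $\widehat C=A\cup\mathcal B_\Delta(H)$ is a vertex cover of $G$, and
\[
  |\widehat C|\le |A|+|\mathcal B_\Delta(H)|.
\]

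The key structural step is to bound $\OPT(H)$ by the surviving part of the fixed cover. Since $H$ is an induced subgraph of $G$ on the undeleted vertices, $C^*\cap V(H)$ is a vertex cover of $H$. Its size is exactly $k^*-P^*$, because $P^*$ counts the vertices of $C^*$ that were deleted. Hence $\OPT(H)\le k^*-P^*$.

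Conditioned on the preprocessing outcome, the coins of $\mathcal B_\Delta$ are independent of everything fixed so far, so (1.1) gives
\[
  \E[|\mathcal B_\Delta(H)|\mid \text{preprocessing}]\le (2-\delta_\Delta)(k^*-P^*).
\]
Combining this with $|A|=P^*+B^*$ yields the conditional bound
\[
  (2-\delta_\Delta)k^*-(2-\delta_\Delta)P^*+P^*+B^*=(2-\delta_\Delta)k^*-(1-\delta_\Delta)P^*+B^*.
\]
Taking expectations over the preprocessing randomness via the tower rule and linearity gives \eqref{eq:master-accounting}.

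The only delicate point is the conditioning. $H$ is random and depends on the advice, so we must use that Halperin's guarantee holds for every fixed input of maximum degree at most $\Delta$, with internal coins independent of the advice. This is exactly the independence convention stated earlier. No step requires real computation.
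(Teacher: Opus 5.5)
Your proposal is correct and follows the paper's proof essentially step for step. You bound $\OPT(H)$ by $|C^*\cap V(H)|=k^*-P^*$, condition on the preprocessing history to apply \eqref{eq:halperin}, substitute $|A|=P^*+B^*$, and then average over the preprocessing randomness.
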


\begin{proof}
  Exactly $P^*$ vertices of $C^*$ have been deleted, so $|C^*\cap V(H)|=k^*-P^*$.
  Every edge of $H$ is an edge of $G$ whose endpoints both survive.
  Since $C^*$ covers $G$, the set $C^*\cap V(H)$ covers $H$.

  Condition on the preprocessing history: the advice realization and all internal random choices made before the call to $\mathcal B_\Delta$.
  This fixes $A$, $H$, $P^*$, and $B^*$.
  Applying \eqref{eq:halperin} then gives
  \begin{align*}
  \E[|\widehat C|\mid\text{preprocessing}]
   & \le |A|+(2-\delta_\Delta)\cdot(k^*-P^*)                     \\
   & =P^*+B^*+(2-\delta_\Delta)\cdot(k^*-P^*)                    \\
   & =(2-\delta_\Delta)\cdot k^*-(1-\delta_\Delta)\cdot P^*+B^*.
  \end{align*}
  Taking the expectation of this conditional inequality over the preprocessing history averages $P^*$ and $B^*$ on its right-hand side, yielding \eqref{eq:master-accounting}.
\end{proof}

The quantities $P^*$ and $B^*$ are defined using the fixed cover $C^*$ and are not computed by either algorithm.
The progress counts vertices of $C^*$ removed by the preprocessing step, and the balance is the number selected into $A$ minus the progress.
In the preprocessing for Noisy Predictions, $B^*$ may be negative.
To use \eqref{eq:master-accounting}, each preprocessing step will bound its balance in the form
\[
  \E[B^*] < (1-\delta_\Delta)\cdot\E[P^*]+O(\alpha)k^*.
  \tag{3}\label{eq:amortization-template}
\]
Here $\alpha\ge0$ bounds the additive error per optimum vertex.
In \eqref{eq:master-accounting}, algorithm $\mathcal B_\Delta$ contributes $(2-\delta_\Delta)\cdot k^*$, saving $\delta_\Delta\cdot k^*$ relative to $2k^*$.
Requiring $\alpha=o(\delta_\Delta)$ makes the additive term $o(\delta_\Delta)k^*$, so it does not change the first-order saving.

\section{Partial Predictions}

The correct revealed labels let the algorithm select revealed cover vertices and the neighbors of revealed independent-set vertices.
Every vertex selected in this way belongs to the fixed optimum $C^*$.
It then selects and removes every vertex whose degree in that residual graph exceeds a prescribed threshold.
These selections and deletions form the preprocessing step analyzed by the framework from Section~\ref{subsec:progress-balance}.

\subsection{Preprocessing}

Partition $S$ according to its received advice signs: put vertices labeled $1$ in $R_{C^*}$ and vertices labeled $-1$ in $R_{I^*}$.
Because the revealed signs are correct, these computable sets satisfy
\[
  R_{C^*}:=S\cap C^*,
  \qquad\text{and}\qquad
  R_{I^*}:=S\cap I^*.
\]
Define the \emph{certified set} and the \emph{residual graph} by
\[
  S_0:=R_{C^*}\cup N_G(R_{I^*}),
  \qquad\text{and}\qquad
  G_0:=G-(S_0\cup R_{I^*}).
\]
Every vertex of $R_{I^*}$ is isolated in $G-S_0$, since $I^*$ is independent and all its neighbors lie in $N_G(R_{I^*})\subseteq S_0$.
Thus, removing $R_{I^*}$ as well does not remove any additional edge.
\begin{lemma}
  \label{lem:certified}
  For every realization of the advice, $S_0\subseteq C^*$ and $\OPT(G_0)=k^*-|S_0|=|C^*\setminus S_0|$.
\end{lemma}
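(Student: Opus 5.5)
The proof has three parts: a containment argument, an upper bound via restriction of $C^*$, and a lower bound via an exchange argument.

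\textbf{Containment.} We show $S_0\subseteq C^*$ directly from the definitions. By the displayed identity, $R_{C^*}=S\cap C^*\subseteq C^*$. For $N_G(R_{I^*})$, take $w$ adjacent to some $u\in R_{I^*}\subseteq I^*$. The edge $uw$ must be covered by $C^*$ and $u\notin C^*$, so $w\in C^*$. Hence $S_0\subseteq C^*$ for every realization of the advice, and $|C^*\setminus S_0|=k^*-|S_0|$.

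\textbf{Upper bound $\OPT(G_0)\le|C^*\setminus S_0|$.} Every edge of $G_0$ is an edge of $G$ with both endpoints outside $S_0\cup R_{I^*}$. The set $C^*$ covers it, and the covering endpoint is not in $S_0$. Thus $C^*\cap V(G_0)\subseteq C^*\setminus S_0$ is a vertex cover of $G_0$.

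\textbf{Lower bound $\OPT(G_0)\ge k^*-|S_0|$.} Let $D$ be any vertex cover of $G_0$; we claim $D\cup S_0$ is a vertex cover of $G$. Take an edge $e=uv$ of $G$:
\begin{itemize}
\item If neither endpoint lies in $S_0\cup R_{I^*}$, then $e\in E(G_0)$ and $D$ covers it.
\item If an endpoint lies in $S_0$, then $S_0$ covers it.
\item Otherwise some endpoint, say $u$, lies in $R_{I^*}$. Then $v\in N_G(R_{I^*})\subseteq S_0$; note $v\notin R_{I^*}$ because $I^*$ is independent. So $e$ is covered.
\end{itemize}
Therefore $k^*\le|D|+|S_0|$, which gives $\OPT(G_0)\ge k^*-|S_0|$. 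Combining the two bounds yields the stated equality.

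The only point needing care is the case analysis for edges incident to $R_{I^*}$, which relies on the independence of $I^*$. Otherwise the argument is routine.
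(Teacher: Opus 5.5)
Your proposal is correct and follows essentially the same route as the paper: containment from independence of $I^*$, the upper bound by restricting $C^*$ to $G_0$, and the lower bound by showing $D\cup S_0$ covers $G$ for any cover $D$ of $G_0$. Your explicit check that $v\notin R_{I^*}$, needed for $v\in N_G(R_{I^*})$ under the paper's definition of $N_G(U)$, is a small detail the paper leaves implicit.
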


\begin{proof}
  Since $R_{C^*}\subseteq C^*$ and $R_{I^*}\subseteq I^*$, and $I^*$ is independent, every neighbor of a vertex in $R_{I^*}$ belongs to $C^*$.
  Thus, $S_0\subseteq C^*$.

  The set $C^*\setminus S_0$ covers $G_0$, so $\OPT(G_0)\le k^*-|S_0|$.
  Conversely, if $C$ covers $G_0$, then every edge of $G$ either meets $S_0$ or survives in $G_0$: any edge incident to $R_{I^*}$ has its other endpoint in $N_G(R_{I^*})\subseteq S_0$.
  Hence, $C\cup S_0$ covers $G$.
  Therefore, $k^*\le |C\cup S_0|\le |C|+|S_0|$, so $|C|\ge k^*-|S_0|$.
  Minimizing over covers $C$ of $G_0$ proves the reverse inequality.
\end{proof}

\subsection{High-degree removal}

Fix an integer threshold $\Delta$ large enough that $\mathcal B_\Delta$ is available.
The first three steps below construct the selected vertices and residual graph required by the framework from Section~\ref{subsec:progress-balance}; the final step performs the completion prescribed there.

\begin{algorithm}
  \label{alg:threshold}
  Given $G$ and the partial predictions:
  \begin{enumerate}[label=\textnormal{(\arabic*)},leftmargin=2.2em,itemsep=2pt,topsep=2pt]
    \item construct $S_0$ and $G_0=G-(S_0\cup R_{I^*})$;
    \item set $S_1:=\{u\in V(G_0)\mid d_{G_0}(u)>\Delta\}$;
    \item let $G_1:=G_0-S_1$;
    \item complete the framework by returning $\widehat C:=S_0\cup S_1\cup\mathcal B_\Delta(G_1)$.
  \end{enumerate}
\end{algorithm}

For the analysis, let
\[
  a_v^*:=|N_G(v)\cap I^*|,
  \qquad\text{and}\qquad
  r_v^*:=\Prb[v\notin S_0].
\]
We call $a_v^*$ the \emph{exposure} of $v$ to $I^*$, and $r_v^*$ is its \emph{survival probability} after the certified preprocessing.

Also let
\[
  T^*:=S_1\cap C^*,
  \qquad\text{and}\qquad
  F^*:=S_1\cap I^*.
\]
Thus, $T^*$ is the set of \emph{true positives}, namely the vertices placed in $S_1$ that belong to $C^*$.
The set $F^*$ consists of the \emph{false positives}, namely the vertices placed in $S_1$ that belong to $I^*$.
Finally, set
\[
  M^*:=|E_G(C^*\setminus S_0,I^*)|.
\]

\begin{lemma}
  \label{lem:feasible}
  Algorithm~\ref{alg:threshold} always returns a vertex cover of $G$, and $\Delta(G_1)\le\Delta$.
\end{lemma}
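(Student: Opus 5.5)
The argument splits into two parts: the degree bound on $G_1$, and coverage of every edge of $G$ by $\widehat C$.

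\emph{Degree bound.} Since $G_1=G_0-S_1$ is an induced subgraph of $G_0$, every $u\in V(G_1)$ satisfies $d_{G_1}(u)\le d_{G_0}(u)$. A vertex of $G_0$ survives into $G_1$ exactly when it is not in $S_1$, that is, when $d_{G_0}(u)\le\Delta$. Hence $\Delta(G_1)\le\Delta$ deterministically, and $\mathcal B_\Delta$ is applicable to $G_1$ and returns a vertex cover of it by the guarantee stated for Halperin's algorithm.

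\emph{Coverage.} Fix an edge $uv\in E(G)$ and trace where its endpoints go. There are three cases.
\begin{itemize}
  \item If an endpoint lies in $S_0$, the edge is covered by $S_0\subseteq\widehat C$.
  \item Otherwise, suppose an endpoint, say $u$, lies in $R_{I^*}$. Then $v\in N_G(R_{I^*})\subseteq S_0$, since $R_{I^*}$ is independent and so $v\notin R_{I^*}$. This contradicts the assumption of this case, so it cannot occur.
  \item Otherwise, both endpoints survive in $G_0$, so $uv\in E(G_0)$. If an endpoint lies in $S_1$, the edge is covered. If not, both endpoints lie in $G_1$, so $uv\in E(G_1)$ and is covered by $\mathcal B_\Delta(G_1)$.
\end{itemize}

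For the second case, the fact that $R_{I^*}\subseteq I^*$ is independent is exactly the observation already made before Lemma~\ref{lem:certified}: every vertex of $R_{I^*}$ is isolated in $G-S_0$. This is the only place where the correctness of the revealed labels is used. I would also note that, in the language of Lemma~\ref{lem:master-accounting}, this shows every removed edge is covered by $A=S_0\cup S_1$, so the framework's hypotheses hold.

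I expect no real obstacle. The only point needing care is the deletion of $R_{I^*}$, which is not selected into the output; the independence of $I^*$ handles it.
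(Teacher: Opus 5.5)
Your proposal is correct and follows essentially the same argument as the paper. The degree bound on $G_1$ comes from degree monotonicity under vertex deletion. Coverage comes from covering each edge at the first stage where one of its endpoints is deleted, with $\mathcal B_\Delta(G_1)$ handling the edges that survive. Your explicit case for $R_{I^*}$, using the independence of $I^*$ and the definition of $N_G$, spells out what the paper's proof leaves to its earlier remark that every vertex of $R_{I^*}$ is isolated in $G-S_0$.
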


\begin{proof}
  Every vertex retained in $G_1$ has degree at most $\Delta$ in $G_0$, and vertex deletion cannot increase degree.
  Every edge of $G$ is covered at the first stage at which one of its endpoints is deleted.
  Hence, an edge surviving both deletions lies in $G_1$ and is covered by $\mathcal B_\Delta(G_1)$.
\end{proof}

\begin{lemma}
  \label{lem:survival}
  For every $v\in C^*$,
  \[
    a_v^*\cdot r_v^*\le\frac{1-\varepsilon}{\varepsilon}(1-r_v^*).
  \]
\end{lemma}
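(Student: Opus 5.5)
For $v\in C^*$, the event $v\notin S_0$ is determined by independent reveal events. A vertex $v\in C^*$ lies in $S_0$ exactly when $v$ itself is revealed (then $v\in R_{C^*}$) or some neighbor in $N_G(v)\cap I^*$ is revealed (then $v\in N_G(R_{I^*})$). A revealed neighbor in $C^*$ does not place $v$ in $S_0$. These are $1+a_v^*$ distinct vertices, each revealed independently with probability $\varepsilon$, so
\[
  r_v^*=(1-\varepsilon)^{1+a_v^*}.
\]
Write $a:=a_v^*$ and $q:=1-\varepsilon$. The claim then becomes the purely numerical inequality
\[
  a\,q^{a+1}\le\frac{q}{\varepsilon}\bigl(1-q^{a+1}\bigr),
  \qquad\text{equivalently}\qquad
  \varepsilon\, a\, q^{a}\le 1-q^{a+1}.
\]

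To prove it, I would expand the right-hand side as a geometric sum: $1-q^{a+1}=\varepsilon\sum_{j=0}^{a}q^j$. Since $q\le1$, each term satisfies $q^j\ge q^a$, so $\sum_{j=0}^{a}q^j\ge(a+1)q^a\ge a\,q^a$. Multiplying by $\varepsilon$ gives the inequality.

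The edge cases need a quick check. If $\varepsilon=1$, then $q=0$ and $r_v^*=0$, so both sides vanish (the factor $(1-\varepsilon)/\varepsilon=0$ is harmless). If $a=0$, the left-hand side is $0$ and the right-hand side is nonnegative.

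The only step needing care is the first one: correctly identifying which reveals force $v$ into $S_0$, and confirming that they are independent and distinct. This rests on $v\notin I^*$, so that $v$ is not itself one of its $I^*$-neighbors. Everything after that is the elementary geometric-sum bound.
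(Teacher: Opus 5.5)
Your proof is correct, but it takes a different route from the paper. Both arguments start from the same identification: for $v\in C^*$, the event $v\notin S_0$ holds exactly when none of the $a_v^*+1$ distinct vertices in $\{v\}\cup(N_G(v)\cap I^*)$ is revealed. From there, you use mutual independence to compute $r_v^*=(1-\varepsilon)^{a_v^*+1}$ exactly, and you finish with the geometric-sum bound $1-q^{a+1}=\varepsilon\sum_{j=0}^{a}q^j\ge\varepsilon(a+1)q^a$.

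The paper never computes $r_v^*$. It lets $Y_v^*$ count the revealed vertices in that set and applies the second-moment (Cauchy--Schwarz) inequality $\E[Y_v^*]^2\le\E[(Y_v^*)^2]\,\Prb[Y_v^*>0]$. This rearranges to $r_v^*\,\E[Y_v^*]^2\le(1-r_v^*)\operatorname{Var}(Y_v^*)$, and the paper then substitutes the mean $\varepsilon(a_v^*+1)$ and the variance $\varepsilon(1-\varepsilon)(a_v^*+1)$.

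Both routes actually give the same slightly stronger bound $(a_v^*+1)\,r_v^*\le\frac{1-\varepsilon}{\varepsilon}(1-r_v^*)$ before the $+1$ is dropped. Your argument is shorter and fully elementary under the model as stated. The paper's argument uses only the first two moments of $Y_v^*$, so it would still work if the reveal events were merely pairwise independent. In that setting the exact product formula for $r_v^*$ that your proof depends on is no longer available.
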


\begin{proof}
  Let
  \[
    Y_v^*:=\sum_{w\in\{v\}\cup(N_G(v)\cap I^*)}
    \mathbf 1_{\{w\in S\}}.
  \]
  The event $v\notin S_0$ is exactly $Y_v^*=0$.
  Independence gives
  \[
    \E[Y_v^*]=\varepsilon(a_v^*+1),
    \qquad\text{and}\qquad
    \operatorname{Var}(Y_v^*)=\varepsilon(1-\varepsilon)(a_v^*+1).
  \]
  By Cauchy--Schwarz,
  \begin{align*}
  \E[Y_v^*]^2
   & =\E[Y_v^* \cdot \mathbf 1_{\{Y_v^*>0\}}]^2         \\
   & \le\E[(Y_v^*)^2]\cdot\E[\mathbf 1_{\{Y_v^*>0\}}^2] \\
   & =\E[(Y_v^*)^2]\cdot\Prb[Y_v^*>0].
  \end{align*}
  Since $r_v^*=\Prb[Y_v^*=0]$ and $\E[(Y_v^*)^2]=\E[Y_v^*]^2+\operatorname{Var}(Y_v^*)$, rearranging gives
  \[
    r_v^*\cdot\E[Y_v^*]^2
    \le(1-r_v^*)\cdot\operatorname{Var}(Y_v^*).
  \]
  Multiplying by $a_v^*/\E[Y_v^*]^2$ and substituting the values for the mean and variance gives
  \[
    a_v^*\cdot r_v^*
    \le(1-r_v^*)\cdot\frac{a_v^*\cdot(1-\varepsilon)}
    {\varepsilon(a_v^*+1)}
    \le\frac{1-\varepsilon}{\varepsilon}(1-r_v^*).
  \]
\end{proof}

Selecting vertices of $S_1$ outside $C^*$ increases the output size without removing a vertex of $C^*$.
To bound their number, we use $M^*$.
Although the preprocessing also removes $R_{I^*}$, each such vertex has all its neighbors in $S_0$.
Thus, this removal deletes no additional $C^*$--$I^*$ edge, and $M^*$ is exactly the number of residual $C^*$--$I^*$ edges in $G_0$.

\begin{lemma}
  \label{lem:cross}
  For every realization of the advice, $M^*=\sum_{v\in C^*}a_v^*\cdot\mathbf 1_{\{v\notin S_0\}}$.
  Hence, $$\E[M^*]=\sum_{v\in C^*}a_v^*\cdot r_v^*.
  $$
\end{lemma}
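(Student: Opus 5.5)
The plan is a direct double count. $M^*=|E_G(C^*\setminus S_0,I^*)|$ counts edges with one endpoint in $C^*\setminus S_0$ and the other in $I^*$. Since $C^*$ and $I^*$ are disjoint and $I^*$ is independent, every such edge has exactly one endpoint $v\in C^*\setminus S_0$. We therefore group these edges by that endpoint. For a fixed $v\in C^*\setminus S_0$, the edges of this type incident to $v$ are exactly those joining $v$ to the vertices of $N_G(v)\cap I^*$, and there are $a_v^*$ of them. Summing over $v\in C^*$ and inserting the indicator $\mathbf 1_{\{v\notin S_0\}}$ to restrict to $C^*\setminus S_0$ gives the pointwise identity
\[
  M^*=\sum_{v\in C^*}a_v^*\cdot\mathbf 1_{\{v\notin S_0\}}
\]
for every realization of the advice. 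No edge is counted twice, because its $C^*$-endpoint is unique.

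The expectation formula then follows from linearity of expectation. The exposures $a_v^*$ depend only on $G$ and the fixed cover $C^*$, not on the advice, so they are deterministic constants. Hence $\E[a_v^*\mathbf 1_{\{v\notin S_0\}}]=a_v^*\Prb[v\notin S_0]=a_v^*r_v^*$ by the definition of the survival probability. The sum is finite, so linearity applies with no further justification.

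There is no real obstacle. The only point needing care is that $M^*$ is defined through edges of $G$ rather than edges of $G_0$, so vertices of $I^*$ removed as $R_{I^*}$ are still counted on the $I^*$ side. This matches the definition of $a_v^*$, which uses all $I^*$-neighbors in $G$. As the remark preceding the lemma notes, these edges in fact cannot occur: a vertex $v\in C^*$ adjacent to $R_{I^*}$ lies in $N_G(R_{I^*})\subseteq S_0$. So the identity is consistent with the interpretation of $M^*$ as the number of residual $C^*$--$I^*$ edges in $G_0$, although the proof does not need this interpretation.
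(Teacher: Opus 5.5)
Your proof is correct and is essentially the paper's argument: group the edges of $E_G(C^*\setminus S_0,I^*)$ by their unique $C^*$-endpoint, so that each $v\in C^*\setminus S_0$ contributes $a_v^*$, then apply linearity of expectation with $a_v^*$ deterministic. The only difference is one of phrasing. The paper runs the count through survival of these edges in $G_0$ (for $v\notin S_0$, none of its $I^*$-neighbors lies in $R_{I^*}$), which also justifies reading $M^*$ as the number of residual $C^*$--$I^*$ edges; you count directly from the definition in $G$ and correctly note that this interpretation is not needed for the identity.
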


\begin{proof}
  For each $v\in C^*$, all $a_v^*$ corresponding cross edges survive in $G_0$ if $v\notin S_0$: none of its neighbors in $I^*$ can then lie in $R_{I^*}$, since that would place $v$ in $N_G(R_{I^*})\subseteq S_0$.
  If $v\in S_0$, none of those edges survives.
  Summing over $C^*$ gives the deterministic identity; linearity of expectation gives the displayed expectation.
\end{proof}

\begin{lemma}
  \label{lem:falsepositive}
  Deterministically, $ |F^*|\le\frac{M^*}{\Delta+1}.
  $ Consequently, $ \E[|F^*|] \le\frac1{\Delta+1}\sum_{v\in C^*}a_v^*\cdot r_v^*. $
\end{lemma}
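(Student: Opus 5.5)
A double-counting argument over the residual $C^*$--$I^*$ edges of $G_0$ should suffice. Every vertex $u\in F^*$ lies in $V(G_0)\cap I^*$ and satisfies $d_{G_0}(u)>\Delta$, hence $d_{G_0}(u)\ge\Delta+1$ because degrees are integers. Since $I^*$ is independent, every neighbor of $u$ in $G_0$ lies in $C^*\cap V(G_0)$. Moreover $V(G_0)\cap C^*\subseteq C^*\setminus S_0$, because $G_0=G-(S_0\cup R_{I^*})$. So every edge of $G_0$ incident to $u$ belongs to $E_G(C^*\setminus S_0,I^*)$.

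Next I would check that these edge sets are disjoint for distinct $u\in F^*$. An edge of $E_G(C^*\setminus S_0,I^*)$ has exactly one endpoint in $I^*$, so it is charged to at most one vertex of $F^*$. Summing over $F^*$ then gives
\[
  (\Delta+1)\,|F^*|\le\sum_{u\in F^*}d_{G_0}(u)\le|E_G(C^*\setminus S_0,I^*)|=M^*,
\]
which is the deterministic bound. Here I only need edges of $G_0$ to be a subset of the edges counted by $M^*$; I do not need the identification of $M^*$ with the residual cross edges noted before Lemma~\ref{lem:cross}.

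For the expectation, take expectations of the deterministic inequality and apply Lemma~\ref{lem:cross}, which gives $\E[M^*]=\sum_{v\in C^*}a_v^*r_v^*$. The only delicate point is confirming that $u\in F^*$ has all of its $G_0$-neighbors outside $S_0\cup R_{I^*}$, which holds by definition of $G_0$. I do not expect any real obstacle.
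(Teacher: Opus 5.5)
Your proposal is correct and follows the paper's proof: double-count residual $C^*$--$I^*$ edges, using $d_{G_0}(u)\ge\Delta+1$ for $u\in F^*$ and the fact that each such edge has exactly one endpoint in $I^*$. Then take expectations and apply Lemma~\ref{lem:cross}. Your remark that you only need the edges of $G_0$ at $F^*$ to lie inside $E_G(C^*\setminus S_0,I^*)$ is accurate and makes the containment slightly more explicit than the paper does.
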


\begin{proof}
  Every neighbor in $G_0$ of a vertex in $I^*$ lies in $C^*\setminus S_0$.
  For $u\in F^*$, integrality and $d_{G_0}(u)>\Delta$ give $d_{G_0}(u)\ge\Delta+1$.
  Moreover, every residual $C^*$--$I^*$ edge is counted for at most one endpoint in $F^*$.
  Hence,
  \[
    (\Delta+1)|F^*|
    \le\sum_{u\in F^*}d_{G_0}(u)
    \le M^*.
  \]
  Taking expectations and using Lemma~\ref{lem:cross} gives the expectation bound.
\end{proof}

\subsection{Approximation guarantee}

For steps (1)--(3) of Algorithm~\ref{alg:threshold}, instantiate the general \emph{progress} and \emph{balance} from Section~\ref{subsec:progress-balance} as:
\[
  P^*:=|S_0|+|T^*|,
  \qquad
  A:=S_0\cup S_1,
  \qquad\text{and}\qquad
  B^*:=|A|-P^*=|F^*|.
  \tag{4}\label{eq:subset-progress-balance}
\]
Thus, in this case, the balance is exactly the false-positive count $|F^*|$.
The sets $S_0$ and $S_1$ are disjoint.
Since both are selected into the output before being deleted, every edge removed by this preprocessing is covered by $A$.

\begin{lemma}
  \label{lem:subset-amortized}
  If steps (1)--(3) of Algorithm~\ref{alg:threshold} are performed with any integer $\Delta\ge0$, then they satisfy
  \[
    \E[M^*]\le\frac{1-\varepsilon}{\varepsilon}\E[|S_0|].
  \]
  Consequently,
  \[
    \E[B^*]
    \le
    \frac{1-\varepsilon}{\varepsilon(\Delta+1)}\E[P^*].
    \tag{5}\label{eq:subset-amortized}
  \]
\end{lemma}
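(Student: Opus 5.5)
The plan is to combine Lemma~\ref{lem:cross} with Lemma~\ref{lem:survival}, and then feed the result into Lemma~\ref{lem:falsepositive}.

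First, by Lemma~\ref{lem:cross}, $\E[M^*]=\sum_{v\in C^*}a_v^*\, r_v^*$. Lemma~\ref{lem:survival} bounds each summand termwise by $\frac{1-\varepsilon}{\varepsilon}(1-r_v^*)$. Summing over $v\in C^*$ gives
\[
  \E[M^*]\le\frac{1-\varepsilon}{\varepsilon}\sum_{v\in C^*}(1-r_v^*)
  =\frac{1-\varepsilon}{\varepsilon}\sum_{v\in C^*}\Prb[v\in S_0].
\]
By Lemma~\ref{lem:certified}, $S_0\subseteq C^*$ for every realization, so $|S_0|=\sum_{v\in C^*}\mathbf 1_{\{v\in S_0\}}$. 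Linearity of expectation then identifies the last sum with $\E[|S_0|]$, which proves the first inequality. This part holds for any $\Delta\ge0$, since neither $M^*$ nor $S_0$ depends on $\Delta$.

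For \eqref{eq:subset-amortized}, recall from \eqref{eq:subset-progress-balance} that $B^*=|F^*|$. Lemma~\ref{lem:falsepositive} gives $\E[|F^*|]\le\E[M^*]/(\Delta+1)$, and combining it with the first inequality yields
\[
  \E[B^*]\le\frac{1-\varepsilon}{\varepsilon(\Delta+1)}\E[|S_0|].
\]
Finally, $P^*=|S_0|+|T^*|\ge|S_0|$ holds pointwise, so $\E[|S_0|]\le\E[P^*]$. The coefficient $\frac{1-\varepsilon}{\varepsilon(\Delta+1)}$ is nonnegative because $\varepsilon\in(0,1]$, so this substitution preserves the inequality.

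No step is a genuine obstacle. The one point needing care is the identification $\sum_{v\in C^*}(1-r_v^*)=\E[|S_0|]$, which relies on the deterministic containment $S_0\subseteq C^*$: without it, certified vertices outside $C^*$ would be missing from the sum.
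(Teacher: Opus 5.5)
Your proposal is correct and follows the paper's proof essentially step for step: you bound each summand of $\E[M^*]$ via Lemma~\ref{lem:survival} and Lemma~\ref{lem:cross}, use $S_0\subseteq C^*$ to identify $\sum_{v\in C^*}(1-r_v^*)=\E[|S_0|]$, and then apply Lemma~\ref{lem:falsepositive} together with $|S_0|\le P^*$. You also state explicitly that the coefficient is nonnegative when substituting $\E[|S_0|]\le\E[P^*]$, which the paper leaves implicit.
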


\begin{proof}
  Since $S_0\subseteq C^*$, linearity of expectation gives $\E[|S_0|]=\sum_{v\in C^*}(1-r_v^*)$.
  Lemmas~\ref{lem:survival} and~\ref{lem:cross} therefore imply
  \[
    \E[M^*]=\sum_{v\in C^*}a_v^*\cdot r_v^*
    \le\frac{1-\varepsilon}{\varepsilon}\sum_{v\in C^*}(1-r_v^*)
    =\frac{1-\varepsilon}{\varepsilon}\E[|S_0|].
  \]
  Lemma~\ref{lem:falsepositive} and \eqref{eq:subset-progress-balance} now gives
  \[
    \E[B^*]=\E[|F^*|]
    \le\frac{\E[M^*]}{\Delta+1}
    \le\frac{(1-\varepsilon)\E[|S_0|]}{\varepsilon(\Delta+1)}
    \le\frac{(1-\varepsilon)\E[P^*]}{\varepsilon(\Delta+1)}.
  \]
\end{proof}

\begin{theorem}
  \label{thm:subset-finite}
  Let $\Delta$ be sufficiently large for \eqref{eq:halperin} to hold.
  Algorithm~\ref{alg:threshold} satisfies
  \[
    \E[|\widehat C|]
    \le
    (2-\delta_\Delta)\cdot k^*-
    \left(
    1-\delta_\Delta-
    \frac{1-\varepsilon}{\varepsilon(\Delta+1)}
    \right)\cdot\E[P^*].
    \tag{6}\label{eq:subset-finite}
  \]
  In particular, if
  \[
    \varepsilon(\Delta+1)(1-\delta_\Delta)\ge1-\varepsilon, \tag{7}\label{eq:subset-threshold-condition}
  \]
  then
  \[
    \E[|\widehat C|]
    \le(2-\delta_\Delta)\cdot\OPT(G).
  \]
\end{theorem}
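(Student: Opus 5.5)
The plan is to combine the master accounting inequality of Lemma~\ref{lem:master-accounting} with the balance bound \eqref{eq:subset-amortized} of Lemma~\ref{lem:subset-amortized}. First I will check that steps (1)--(3) of Algorithm~\ref{alg:threshold} form a valid preprocessing step in the sense of Section~\ref{subsec:progress-balance}.

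The selected set is $A=S_0\cup S_1$. The only deleted vertices are those in $S_0\cup R_{I^*}\cup S_1$. Each removed edge is covered by $A$: edges at $R_{I^*}$ have their other endpoint in $N_G(R_{I^*})\subseteq S_0$, and all other removed edges meet $S_0$ or $S_1$. By Lemma~\ref{lem:feasible}, the residual graph $G_1$ satisfies $\Delta(G_1)\le\Delta$.

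Next I will confirm the progress identity $P^*=|S_0|+|T^*|$. The deleted vertices of $C^*$ are exactly $S_0$ (which lies in $C^*$ by Lemma~\ref{lem:certified}) together with $T^*=S_1\cap C^*$, and $R_{I^*}$ contributes none. Since $S_0$ and $S_1$ are disjoint, $B^*=|A|-P^*=|F^*|$, as in \eqref{eq:subset-progress-balance}.

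With this in place, Lemma~\ref{lem:master-accounting} gives
\[
\E[|\widehat C|]\le(2-\delta_\Delta)k^*-(1-\delta_\Delta)\E[P^*]+\E[B^*].
\]
Substituting \eqref{eq:subset-amortized}, namely $\E[B^*]\le\frac{1-\varepsilon}{\varepsilon(\Delta+1)}\E[P^*]$, and collecting the coefficient of $\E[P^*]$ yields \eqref{eq:subset-finite} directly.

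For the second claim, condition \eqref{eq:subset-threshold-condition} is equivalent, after dividing by $\varepsilon(\Delta+1)>0$, to
\[
1-\delta_\Delta-\frac{1-\varepsilon}{\varepsilon(\Delta+1)}\ge0.
\]
Since $\E[P^*]\ge0$, the subtracted term in \eqref{eq:subset-finite} is nonnegative and can be dropped. This leaves $(2-\delta_\Delta)k^*=(2-\delta_\Delta)\OPT(G)$.

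There is no real obstacle here. The only step needing care is checking that the removal of $R_{I^*}$ meets the hypotheses of Lemma~\ref{lem:master-accounting}, which requires every removed edge to be covered and the progress to count exactly the deleted $C^*$ vertices. That check uses the independence of $I^*$, in the same way as Lemma~\ref{lem:certified}.
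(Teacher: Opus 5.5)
Your proposal is correct and follows the paper's proof: apply Lemma~\ref{lem:master-accounting} to steps (1)--(3), substitute \eqref{eq:subset-amortized}, and note that \eqref{eq:subset-threshold-condition} makes the coefficient of $\E[P^*]$ nonnegative. Your additional checks (that removed edges are covered by $A$, that $P^*=|S_0|+|T^*|$, and that $B^*=|F^*|$) are the same facts the paper records around \eqref{eq:subset-progress-balance}.
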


\begin{proof}
  Apply Lemma~\ref{lem:master-accounting} to the preprocessing in Algorithm~\ref{alg:threshold}, then substitute \eqref{eq:subset-amortized}.
  Condition \eqref{eq:subset-threshold-condition} makes the coefficient of $\E[P^*]$ in \eqref{eq:subset-finite} nonnegative.
\end{proof}

\begin{theorem}
  \label{thm:main-upper}
  For every sufficiently small fixed $\varepsilon>0$, Algorithm~\ref{alg:threshold} with a suitable integer $\Delta$ is a randomized polynomial-time algorithm with partial predictions that always outputs a vertex cover and satisfies
  \[
    \frac{\E[|\widehat C|]}{\OPT(G)}
    \le
    2-(2-o(1))\frac{\log\log(\nicefrac{1}{\varepsilon})}{\log(\nicefrac{1}{\varepsilon})}
    \qquad (\varepsilon\to0).
  \]
\end{theorem}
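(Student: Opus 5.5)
The plan is to instantiate Theorem~\ref{thm:subset-finite} with a threshold $\Delta=\Delta(\varepsilon)$ of order $\nicefrac{1}{\varepsilon}$, and then translate $\delta_\Delta$ from \eqref{eq:halperin_delta} into a function of $\varepsilon$. Concretely, I would take
\[
  \Delta:=\left\lceil \frac{2}{\varepsilon}\right\rceil.
\]
Any choice with $\Delta=\Theta(\nicefrac1\varepsilon)$ and $\varepsilon\Delta$ bounded away from $1$ from above suffices.

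First I verify condition \eqref{eq:subset-threshold-condition}. As $\varepsilon\to0$ we have $\Delta\to\infty$, so $\delta_\Delta\to0$ by \eqref{eq:halperin_delta}, and in particular $\delta_\Delta\le\nicefrac12$ for all sufficiently small $\varepsilon$. Then
\[
  \varepsilon(\Delta+1)(1-\delta_\Delta)\ge 2\cdot\tfrac12=1\ge1-\varepsilon.
\]
Also, $\Delta$ is large enough for $\mathcal B_\Delta$ to exist once $\varepsilon$ is small. Theorem~\ref{thm:subset-finite} then gives $\E[|\widehat C|]\le(2-\delta_\Delta)\OPT(G)$. Feasibility of the output is Lemma~\ref{lem:feasible}. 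Polynomial running time holds because steps (1)--(3) are clearly polynomial and $\mathcal B_\Delta$ is polynomial for the fixed $\Delta$ depending only on $\varepsilon$.

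It remains to show $\delta_\Delta=(2-o(1))\frac{\log\log(1/\varepsilon)}{\log(1/\varepsilon)}$ as $\varepsilon\to0$. Since $\nicefrac2\varepsilon\le\Delta\le\nicefrac3\varepsilon$ for small $\varepsilon$, we get $\log\Delta=\log(\nicefrac1\varepsilon)+O(1)=(1+o(1))\log(\nicefrac1\varepsilon)$. Consequently
\[
  \log\log\Delta=\log\log(\nicefrac1\varepsilon)+o(1)=(1+o(1))\log\log(\nicefrac1\varepsilon).
\]
The function $t\mapsto\frac{\log t}{t}$ is continuous, so the ratio $\frac{\log\log\Delta}{\log\Delta}$ equals $(1+o(1))\frac{\log\log(1/\varepsilon)}{\log(1/\varepsilon)}$. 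Multiplying by the factor $(2-o(1))$ from \eqref{eq:halperin_delta} gives the claim, since $o(1)$ in $\Delta$ is also $o(1)$ in $\varepsilon$ because $\Delta\to\infty$ as $\varepsilon\to0$.

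The only delicate point is the bookkeeping of the $o(1)$ terms: the multiplicative errors must be absorbed into the coefficient $2-o(1)$ rather than changing the leading constant. This works because replacing $\Delta$ by a constant multiple of $\nicefrac1\varepsilon$ shifts $\log\Delta$ only additively by $O(1)$.
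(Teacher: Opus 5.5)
Your proposal is correct and follows essentially the same route as the paper. The paper chooses $\lceil\alpha(1-\varepsilon)/\varepsilon\rceil$ with a fixed $\alpha>1$ where you choose $\lceil 2/\varepsilon\rceil$; otherwise both verify condition~\eqref{eq:subset-threshold-condition} from $\delta_\Delta\to0$, apply Theorem~\ref{thm:subset-finite}, and observe that $\log\Delta=\log(\nicefrac{1}{\varepsilon})+O(1)$ leaves the leading coefficient $2$ in $\delta_\Delta$ unchanged.
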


\begin{proof}
  Fix any constant $\alpha>1$ and set
  \[
    \Delta_{\mathrm{sub}}
    :=\left\lceil\frac{\alpha(1-\varepsilon)}{\varepsilon}\right\rceil.
  \]
  Since $\delta_\Delta\to0$, for all sufficiently small $\varepsilon$ we have
  \[
    \varepsilon(\Delta_{\mathrm{sub}}+1)
    \cdot(1-\delta_{\Delta_{\mathrm{sub}}})
    \ge \alpha(1-\varepsilon)(1-\delta_{\Delta_{\mathrm{sub}}})
    \ge1-\varepsilon.
  \]
  Theorem~\ref{thm:subset-finite} therefore applies.
  Moreover,
  \[
    1-\varepsilon\longrightarrow1,
    \qquad\text{and}\qquad
    \Delta_{\mathrm{sub}}=\Theta(\nicefrac{1}{\varepsilon}).
  \]
  Writing $L:=\log(\nicefrac{1}{\varepsilon})$, we obtain
  \[
    \begin{aligned}
    \log\Delta_{\mathrm{sub}} & =L+O(1)=L\bigl(1+O(\nicefrac{1}{L})\bigr)    \\
    \text{and}\qquad\log\log\Delta_{\mathrm{sub}}
                              & =\log L+\log\bigl(1+O(\nicefrac{1}{L})\bigr)
    =\log L+O(\nicefrac{1}{L}).
    \end{aligned}
  \]
  Substitution in \eqref{eq:halperin_delta} gives
  \[
    \delta_{\Delta_{\mathrm{sub}}}
    =(2-o(1))\frac{\log L}{L}
    =(2-o(1))
    \frac{\log\log(\nicefrac{1}{\varepsilon})}{\log(\nicefrac{1}{\varepsilon})}.
  \]
  This proves the theorem.
\end{proof}

\section{Noisy Predictions}

Our algorithm for noisy predictions first uses each noisy prediction to generate signs for several iterations and then processes high-degree vertices under decreasing degree thresholds.
During an iteration, it uses the current neighbors' signs to decide whether to add each high-degree vertex or its current neighbors to the cover.
Each operation deletes the high-degree vertex, and when the algorithm adds its current neighbors, it deletes them as well.
Vertices that remain induce the subgraph for the following iteration.
These high-degree iterations instantiate the preprocessing framework from Section~\ref{subsec:progress-balance}, which we analyze using its progress and balance quantities.

The residual graph at one iteration depends on signs used in earlier iterations.
To apply concentration inequalities on that graph, the current signs must remain independent across vertices and biased toward their true labels after conditioning on those earlier signs.
The construction below is applied independently at each vertex; it places each noisy prediction in at most one coordinate, filling the others with independent fair signs.

\subsection{Information splitting}

For every integer $j\ge0$, set $c_j:=2^{-j-1}$ and $\eta_j:=\varepsilon \cdot c_j$.
Given an integer $J\ge0$, fix a vertex $v$ of true label $x_v^*$.
Let $\widetilde x_v\in\{\pm1\}$ be its noisy prediction, with $\Prb[\widetilde x_v=x_v^*]=(1+\varepsilon)/2$.
With no access to $x_v^*$, generate $\widetilde X_v=(\widetilde X_v^{(0)},\ldots,\widetilde X_v^{(J)})\in\{\pm1\}^{J+1}$ as follows.
Flip a fair coin until a head first appears, stopping after $J+1$ tails.
If a head first appears after $j$ tails, set $\widetilde X_v^{(j)}=\widetilde x_v$ and give every other iteration an independent fair sign.
If all $J+1$ flips are tails, give every iteration an independent fair sign.
The following lemma gives the probability of every complete vector of iteration signs.
It also gives the expected value of each sign and the covariance between signs from distinct iterations.

\begin{lemma}
  \label{lem:information-splitting}
  Fix the true sign $x_v^*\in\{\pm1\}$.
  For every $Y\in\{\pm1\}^{J+1}$, the vector produced by the above process satisfies
  \begin{equation}
  \tag{8}
  \label{eq:iteration-sign-law}
  \Prb[\widetilde X_v=Y]
  =2^{-(J+1)} \left(1+x_v^*\cdot\sum_{j=0}^J\eta_j\cdot Y^{(j)}\right).
  \end{equation}
  Moreover, $\E[\widetilde X_v^{(j)}]=\eta_j\cdot x_v^*$ for every $j$, and $\operatorname{Cov}(\widetilde X_v^{(i)},\widetilde X_v^{(j)})=-\eta_i\cdot\eta_j$ for distinct $i,j$.
\end{lemma}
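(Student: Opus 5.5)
We will compute the law of $\widetilde X_v$ by conditioning on the stopping index of the coin flips. Let $K\in\{0,1,\dots,J\}\cup\{\bot\}$ denote the number of tails before the first head, with $K=\bot$ if all $J+1$ flips are tails. Then $\Prb[K=j]=2^{-j-1}=c_j$ for $0\le j\le J$, and $\Prb[K=\bot]=2^{-(J+1)}$. The coins are independent of $\widetilde x_v$.

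Fix $Y\in\{\pm1\}^{J+1}$. On the event $K=\bot$, every coordinate is an independent fair sign, so the conditional probability of $Y$ is $2^{-(J+1)}$. On the event $K=j$, the $J$ coordinates other than $j$ are fair and independent, contributing $2^{-J}$. Coordinate $j$ equals $\widetilde x_v$, which matches $Y^{(j)}$ with probability $(1+\varepsilon x_v^*Y^{(j)})/2$; this expression covers both cases $Y^{(j)}=\pm x_v^*$. Hence the conditional probability is $2^{-(J+1)}(1+\varepsilon x_v^* Y^{(j)})$.

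Summing with weights gives
\[
\Prb[\widetilde X_v=Y]=2^{-(J+1)}\Bigl(\Prb[K=\bot]+\sum_{j=0}^J c_j\bigl(1+\varepsilon x_v^*Y^{(j)}\bigr)\Bigr).
\]
The weights satisfy $\Prb[K=\bot]+\sum_{j=0}^J c_j=1$. Since $\varepsilon c_j=\eta_j$, the expression collapses to \eqref{eq:iteration-sign-law}.

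For the moments, we work directly from \eqref{eq:iteration-sign-law}. Let $Y$ be uniform on $\{\pm1\}^{J+1}$. Then the density of $\widetilde X_v$ relative to the uniform law is $1+x_v^*\sum_\ell\eta_\ell Y^{(\ell)}$. The uniform coordinates satisfy $\E[Y^{(\ell)}]=0$ and $\E[Y^{(j)}Y^{(\ell)}]=\mathbf 1_{\{j=\ell\}}$, and triple products of the form $Y^{(i)}Y^{(j)}Y^{(\ell)}$ with $i\ne j$ have mean zero.

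Applying this:
\begin{itemize}
\item $\E[\widetilde X_v^{(j)}]=\E\bigl[Y^{(j)}(1+x_v^*\sum_\ell\eta_\ell Y^{(\ell)})\bigr]=\eta_j x_v^*$.
\item For $i\ne j$, $\E[\widetilde X_v^{(i)}\widetilde X_v^{(j)}]=\E\bigl[Y^{(i)}Y^{(j)}(1+x_v^*\sum_\ell\eta_\ell Y^{(\ell)})\bigr]=0$.
\end{itemize}
Therefore $\operatorname{Cov}(\widetilde X_v^{(i)},\widetilde X_v^{(j)})=0-\eta_i x_v^*\cdot\eta_j x_v^*=-\eta_i\eta_j$, using $(x_v^*)^2=1$.

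The only step needing care is the bookkeeping in the law computation: the $K=\bot$ term must be included so that the constant part of the sum is exactly $1$. The rest is routine Fourier-type averaging over the hypercube.
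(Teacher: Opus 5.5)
Your proposal is correct and follows the paper's proof essentially step for step. You condition on the number of tails before the first head, include the all-tails event so the constant term is exactly $1$, and then obtain the mean and the mixed moment by averaging $Y^{(j)}$ and $Y^{(i)}Y^{(j)}$ against the density $1+x_v^*\sum_\ell\eta_\ell Y^{(\ell)}$ over the uniform hypercube, where the paper instead writes these averages out as explicit sign-pairing sums.
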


\begin{proof}
  Fix $Y\in\{\pm1\}^{J+1}$.
  The probability that a head first appears after $j$ tails is $c_j$, and the probability that no head appears is $2^{-(J+1)}$.
  Conditional on the event that $\widetilde x_v$ is assigned after $j$ tails,
  \[
    \Prb[\widetilde X_v=Y\mid \widetilde x_v\text{ is assigned after $j$ tails}]
    =2^{-(J+1)} (1+\varepsilon \cdot x_v^*\cdot Y^{(j)}).
  \]
  On the other hand, conditional on no head appearing, $\widetilde X_v$ is uniform.
  Therefore, the law of total probability and $\sum_{j=0}^J c_j+2^{-(J+1)}=1$ give
  \begin{align*}
  \Prb[\widetilde X_v=Y]
   & =2^{-(J+1)} \left(\sum_{j=0}^Jc_j+x_v^*\cdot\sum_{j=0}^J\eta_j\cdot Y^{(j)}+2^{-(J+1)}\right) \\
   & =2^{-(J+1)} \left(1+x_v^*\cdot\sum_{j=0}^J\eta_j\cdot Y^{(j)}\right).
  \end{align*}
  Moreover, $\sum_{Z\in\{\pm1\}^{J+1}}Z^{(j)}=0$ for every $j$, and hence
  \[
    \sum_{Z\in\{\pm1\}^{J+1}}
    \Prb[\widetilde X_v=Z]
    =2^{-(J+1)}\cdot2^{J+1}=1.
  \]
  For every $j$,
  \[
    \E[\widetilde X_v^{(j)}]
    =\sum_{Z\in\{\pm1\}^{J+1}}Z^{(j)}\cdot\Prb[\widetilde X_v=Z].
  \]
  Substituting \eqref{eq:iteration-sign-law} into this expectation and expanding gives
  \begin{align*}
  \E[\widetilde X_v^{(j)}]
   & =2^{-(J+1)} \sum_{Z\in\{\pm1\}^{J+1}}Z^{(j)}\cdot
  \left(1+x_v^*\cdot\sum_{i=0}^J\eta_i\cdot Z^{(i)}\right) \\
   & =2^{-(J+1)} \sum_{Z\in\{\pm1\}^{J+1}}Z^{(j)}
  +2^{-(J+1)}\cdot x_v^*\cdot\sum_{i=0}^J\eta_i\cdot
  \sum_{Z\in\{\pm1\}^{J+1}}Z^{(j)}\cdot Z^{(i)}.
  \end{align*}
  The first sum in the expansion is zero because the positive and negative values of $Z^{(j)}$ cancel in pairs.
  In the second sum, $Z^{(j)}Z^{(i)}=1$ when $i=j$, and when $i\ne j$, the products cancel in pairs by changing the sign of coordinate $j$.
  Hence,
  \[
    2^{-(J+1)} \sum_{Z\in\{\pm1\}^{J+1}}Z^{(j)}\cdot Z^{(i)}
    =
    \begin{cases}
    1, & i=j,   \\
    0, & i\ne j
    \end{cases}
    .
  \]
  Therefore,
  \[
    \E[\widetilde X_v^{(j)}]
    =x_v^*\cdot\sum_{i=0}^J\eta_i\cdot\mathbf 1_{\{i=j\}}
    =\eta_j\cdot x_v^*.
  \]
  For distinct $i,j$, substituting \eqref{eq:iteration-sign-law} into the expectation of the product gives
  \begin{align*}
  \E[\widetilde X_v^{(i)}\cdot\widetilde X_v^{(j)}]
   & =\sum_{Z\in\{\pm1\}^{J+1}}
  Z^{(i)}Z^{(j)}\cdot\Prb[\widetilde X_v=Z]                       \\
   & =2^{-(J+1)} \sum_{Z\in\{\pm1\}^{J+1}}Z^{(i)}Z^{(j)}\cdot
  \left(1+x_v^*\cdot\sum_{k=0}^J\eta_k\cdot Z^{(k)}\right)        \\
   & =2^{-(J+1)} \sum_{Z\in\{\pm1\}^{J+1}}Z^{(i)}Z^{(j)}
  +2^{-(J+1)}\cdot x_v^*\cdot\sum_{k=0}^J\eta_k\cdot
  \sum_{Z\in\{\pm1\}^{J+1}}Z^{(i)}Z^{(j)}Z^{(k)}.
  \end{align*}
  The first sum is zero because flipping coordinate $i$ changes the sign of $Z^{(i)}Z^{(j)}$.
  For the second sum, when $k=i$, $(Z^{(i)})^2 = 1$, so the inner sum reduces to the sum of $Z^{(j)}$ over all sign vectors, which is zero.
  When $k=j$, it similarly reduces to the sum of $Z^{(i)}$, which is again zero.
  When $k\notin\{i,j\}$, pair each sign vector with the one obtained by flipping coordinate $k$.
  Their products $Z^{(i)}Z^{(j)}Z^{(k)}$ have opposite signs, so this inner sum is zero as well.
  So we conclude that
  \[
    \E[\widetilde X_v^{(i)}\cdot\widetilde X_v^{(j)}]=0.
  \]
  Since $(x_v^*)^2=1$, it follows that
  \[
    \operatorname{Cov}(\widetilde X_v^{(i)},\widetilde X_v^{(j)})
    =0-(\eta_i\cdot x_v^*)\cdot(\eta_j\cdot x_v^*)
    =-\eta_i\cdot\eta_j.
  \]
\end{proof}

\subsection{Iterated high-degree removal}

Now, let $\alpha:=\nicefrac{1}{32}$ and fix an integer $\Gamma>\nicefrac{256}{\alpha^2}$ and set
\[
  \Delta_{\mathrm{lab}}:=
  \left\lceil \Gamma\cdot\varepsilon^{-2}\cdot\log(\nicefrac{1}{\varepsilon})\right\rceil,
  \qquad\text{and}\qquad
  D_j:=8^j\cdot\Delta_{\mathrm{lab}}.
\]
Take $J$ to be the least nonnegative integer such that $D_J \ge \nicefrac{\Delta(G)}{8}$.
Since $\Delta(G)\le n-1$, we have $J=O_\varepsilon(\log n)$.
Choose a fixed rational $t_\varepsilon$ with $|t_\varepsilon-(1-\alpha)\varepsilon|\le\nicefrac{\alpha}{2} \cdot \varepsilon$, and set $\tau_j:=t_\varepsilon\cdot c_j$ for every integer $j \ge 0$.
Such a rational exists for every fixed $\varepsilon>0$ and belongs to the finite description of the corresponding algorithm.
Moreover, the choice of $t_\varepsilon$ gives
\[
  |\tau_j-(1-\alpha)\eta_j|\le\frac{\alpha}{2}\eta_j,
  \qquad\text{and}\qquad
  \left(1-\frac{3\alpha}{2}\right)\eta_j
  \le\tau_j\le
  \left(1-\frac{\alpha}{2}\right)\eta_j.
\]
With $J$ fixed, apply the construction in Lemma~\ref{lem:information-splitting} independently to each $\widetilde x_v$, obtaining iteration signs $\widetilde X_v$ with the distribution stated there.
The construction uses at most $2(J+1)$ fair coins and $O(J+1)$ work per vertex in the worst case.

These parameters and signs define the algorithm below.
The first two steps construct the selected vertices and residual graph required by the framework of Section~\ref{subsec:progress-balance}, and the final step completes the algorithm.

\begin{algorithm}
  \label{alg:noisy}
  Given $G$ and the iteration signs:
  \begin{enumerate}[label=\textnormal{(\arabic*)},leftmargin=2.2em,itemsep=2pt,topsep=2pt]
    \item set $H_J:=G$ and initialize $S$ as an empty vertex set;

    \item\label{step:noisy-iterations} for each $j=J,J-1,\ldots,0$ do

          \begin{enumerate}[label=\textnormal{(\alph*)},leftmargin=2em,itemsep=2pt,topsep=2pt]
            \item if $j<J$, let $H_j$ be the residual graph resulting from iteration $j+1$;
            \item let $W_j:=\{v\in V(H_j)\mid d_{H_j}(v)>D_j\}$;
            \item for every $v\in W_j$ compute
                  \[
                    T_v:=\sum_{u\in N_{H_j}(v)}
                    \widetilde X_u^{(j)},
                  \]
                  and then classify, for this iteration, $v$ as a \emph{pivot} if $T_v\ge\tau_j\cdot d_{H_j}(v)$, and as a \emph{take} otherwise, then freeze all classifications;
            \item process $W_j$ in an arbitrary fixed order: when $v$ is reached, do nothing if $v$ has been deleted or if its degree after the preceding deletions in this iteration is at most $D_j$; otherwise, if $v$ is a take, add $v$ to $S$ and delete $v$; if $v$ is a pivot, add its current open neighborhood to $S$ and delete its current closed neighborhood.
          \end{enumerate}

    \item complete the framework by returning $\widehat C:=S \cup\mathcal B_{\Delta_{\mathrm{lab}}}(H_0)$.
  \end{enumerate}
\end{algorithm}

The algorithm achieves the guarantee below.
Its proof uses two invariants: a degree invariant and a probabilistic invariant about the conditional distribution of the current iteration signs given the signs used in earlier iterations.

\begin{theorem}
  \label{thm:noisy-upper}
  For every sufficiently small fixed $\varepsilon>0$, Algorithm~\ref{alg:noisy} is a randomized polynomial-time algorithm with noisy predictions that always outputs a vertex cover and satisfies
  \[
    \frac{\E[|\widehat C|]}{\OPT(G)}
    \le
    2-(1-o(1))\frac{\log\log(\nicefrac{1}{\varepsilon})}{\log(\nicefrac{1}{\varepsilon})}
    \qquad (\varepsilon\to0).
  \]
\end{theorem}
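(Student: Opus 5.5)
We follow the same route as for Theorem~\ref{thm:main-upper}: instantiate Lemma~\ref{lem:master-accounting} with $\Delta:=\Delta_{\mathrm{lab}}$. We show that the iterations in step~\ref{step:noisy-iterations} satisfy the amortization template~\eqref{eq:amortization-template} with additive error $O(\alpha')k^*$, where $\alpha'=o(\delta_{\Delta_{\mathrm{lab}}})$. Then we substitute $\Delta_{\mathrm{lab}}=\Theta(\varepsilon^{-2}\log(\nicefrac1\varepsilon))$ into \eqref{eq:halperin_delta}. With $L:=\log(\nicefrac1\varepsilon)$ we have $\log\Delta_{\mathrm{lab}}=2L+O(\log L)$ and $\log\log\Delta_{\mathrm{lab}}=\log L+O(1)$. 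Hence $\delta_{\Delta_{\mathrm{lab}}}=(1-o(1))\log L/L$, which is the claimed coefficient $1$.

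Feasibility and running time are routine, and we check them first. Every deleted vertex is either added to $S$ or is a pivot whose whole current neighborhood is added, so every removed edge is covered by $A:=S$. We then prove the degree invariant $\Delta(H_j)\le 8D_j$ by downward induction. At $j=J$ it holds by the choice of $J$. At the end of iteration $j$, every surviving vertex either was never in $W_j$ or was skipped because its degree had dropped to at most $D_j$; degrees only decrease afterwards, so $\Delta(H_{j-1})\le D_j=8D_{j-1}$. In particular $\Delta(H_0)\le D_0=\Delta_{\mathrm{lab}}$ after iteration $0$. There are $J=O_\varepsilon(\log n)$ iterations, each running in polynomial time.

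The core is a per-operation accounting, done iteration by iteration. Set $P^*_j$ to be the number of $C^*$ vertices deleted in iteration $j$, and $B^*_j:=|S_j|-P^*_j$ the balance contributed by that iteration. We aim to show $\E[B^*_j]\le(1-\delta)\E[P^*_j]+O(\alpha')\cdot(\text{charge})$, where the charges sum to $O(k^*)$ over all iterations.
\begin{itemize}
\item A take $v$ with $v\in C^*$ gives progress $1$ and balance $0$.
\item A pivot $v\in I^*$ gives balance $-1$, since all its neighbors lie in $C^*$.
\item A take $v\in I^*$ costs balance $+1$.
\item A pivot $v\in C^*$ whose current neighborhood has $m$ vertices of $I^*$ costs balance $m-1$ plus at most $1$.
\end{itemize}
The bad events are controlled by the probabilistic invariant. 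Conditional on the signs of iterations $>j$ (which determine $H_j$ and $W_j$), the signs $\widetilde X_u^{(j)}$ are independent across $u$ with conditional mean $x_u^*$ times roughly $\eta_j$. This should follow from \eqref{eq:iteration-sign-law}: conditioning a vector of that form on some coordinates leaves an affine law, whose remaining coordinates have biases $\eta_j/(1+x^*_u\sum_{i>j}\eta_i Y^{(i)})=\eta_j(1\pm O(\varepsilon))$. With this invariant, Hoeffding applied to $T_v$ handles the take case. If $v\in I^*$, then $T_v$ has conditional mean $\approx\eta_j d_{H_j}(v)$ above the threshold $\tau_j d\le(1-\nicefrac{\alpha}{2})\eta_j d$. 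The failure probability is therefore $\exp(-\Omega(\alpha^2\eta_j^2 D_j))=\exp(-\Omega(\alpha^2\Gamma 4^{-j}8^j\log(\nicefrac1\varepsilon)))$, which is tiny, and this is where $\Gamma>256/\alpha^2$ is used. For a $C^*$ pivot, a large fraction of $I^*$ neighbors has exponentially small probability. If the fraction of $I^*$ neighbors is small, the pivot removes many $C^*$ vertices for few $I^*$ ones, and the balance is nonpositive up to the $\tau_j$ slack. The degree bounds $D_j<d\le 8D_j$ let us charge each failure's cost of at most $8D_j$ against the $\Omega(D_j)$ $C^*$-edges at $v$. Summing the geometric factors over $j$ then gives total error $O(\varepsilon^{\Omega(\Gamma\alpha^2)})k^*=o(\delta)k^*$.

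We expect the main obstacle to be the pivot-in-$C^*$ accounting. Its neighborhood may contain up to roughly a $(1-\tau_j/\eta_j)/2$ fraction of $I^*$ vertices while still passing the test. Getting balance at most $(1-\delta)\cdot$progress, rather than merely nonpositive, requires that the $I^*$-fraction be well below $1/2$ whenever $T_v\ge\tau_j d$. This comparison must also survive the sequential deletions within an iteration, which shrink the current neighborhood relative to the frozen $T_v$. Our first attempt is to compare against the frozen $N_{H_j}(v)$ and use the invariant together with a union bound over $W_j$ via linearity of expectation. If that is too lossy, we would fall back on charging the deleted neighbors to the earlier pivots that removed them.
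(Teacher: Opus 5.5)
Your outline follows the paper's route almost step for step:
\begin{itemize}
\item Lemma~\ref{lem:master-accounting} with threshold $\Delta_{\mathrm{lab}}$, and the degree invariant $\Delta(H_j)\le 8D_j$;
\item conditional independence of the iteration-$j$ signs given earlier-iteration signs, with bias $\eta_j(1\pm O(\varepsilon))$;
\item Hoeffding applied to $T_v$, and edge counting to bound the cost summed over $W_j$;
\item doubly exponential decay beating the $8^j$ growth of $D_j$, and the same asymptotics for $\delta_{\Delta_{\mathrm{lab}}}$.
\end{itemize}
The gap is the step you flag yourself and leave open: bounding the balance of a pivot at a cover vertex whose neighbourhood shrinks during the iteration. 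This is the case that carries the whole amortization \eqref{eq:amortization-template}.

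Two interim claims there are also wrong. A low-exposure cover pivot does not have balance ``nonpositive up to the slack.'' Its balance is $a'-1$, generally positive, so you need a ratio bound, not a sign bound. Separately, a pivot at $v\in I^*$ selects $d'$ vertices and deletes $d'$ cover vertices, so its balance is $0$, not $-1$; this one is harmless.

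No charging to earlier pivots is needed. Your first idea, working with the frozen neighbourhood and summing over $W_j$, suffices once you add one inequality. Let $d:=d_{H_j}(v)$ and $a:=|N_{H_j}(v)\cap I^*|$, both frozen at the start of the iteration, and split cover pivots at the cutoff $a\ge\alpha d$ versus $a<\alpha d$.

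\emph{Case $a\ge\alpha d$: a bad event.} The conditional mean of $T_v$ is below $(1-2\alpha+\alpha/4)\eta_j d$, while $\tau_j d\ge(1-3\alpha/2)\eta_j d$. The cutoff $\alpha$, rather than the test's natural fraction $\alpha/2$, leaves the margin $\alpha\eta_j d/4$. So Hoeffding bounds the probability by $e^{-\frac{\alpha^2}{128}\varepsilon^2 2^j\Delta_{\mathrm{lab}}}$. The cost is $a'-1<d\le 8D_j$, handled exactly like a false take.

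\emph{Case $a<\alpha d$: a deterministic ratio bound.} A vertex is processed only if its current degree satisfies $d'>D_j$, while $d\le 8D_j$ by the degree invariant. Also, $I^*$-neighbours are only ever deleted. Hence $a'\le a<\alpha d\le 8\alpha D_j<8\alpha d'$. With $\alpha=1/32$ this gives
\[
\frac{B_v^*}{P_v^*}=\frac{a'-1}{d'+1-a'}<\frac{8\alpha}{1-8\alpha}=\frac13 .
\]

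So the factor $8$ between consecutive thresholds, together with the skip rule, is exactly what absorbs within-iteration deletions. You get $\E[B^*]\le\frac13\E[P^*]+O\bigl(\Delta_{\mathrm{lab}}e^{-\frac{\alpha^2}{128}\varepsilon^2\Delta_{\mathrm{lab}}}\bigr)k^*$. In \eqref{eq:master-accounting} the coefficient of $\E[P^*]$ is then $\frac23-\delta_{\Delta_{\mathrm{lab}}}>0$, so the progress term can be dropped, and the rest of your argument goes through.
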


\subsection{Required invariants}

We first verify that every iteration covers the edges it removes and leaves a degree bound for the next iteration.

\begin{lemma}
  \label{lem:noisy-invariant}
  At the start of iteration $j$, the residual graph satisfies $\Delta(H_j)\le8D_j$.
  After iteration $j$, its maximum degree is at most $D_j$.
  Every edge removed while processing a take or pivot node is covered by the vertices selected in that operation.
\end{lemma}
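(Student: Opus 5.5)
We argue by downward induction on $j$, from $J$ to $0$, handling the three claims together. The base case is iteration $J$: by the choice of $J$ we have $D_J\ge\Delta(G)/8$, so $\Delta(H_J)=\Delta(G)\le 8D_J$.

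For the inductive step, suppose that after iteration $j+1$ the residual graph has maximum degree at most $D_{j+1}$. Since $H_j$ is exactly that residual graph and $D_{j+1}=8D_j$, we get $\Delta(H_j)\le 8D_j$. Notice that the bound $8D_j$ is never used to prove the second claim; it is simply carried forward as the invariant the analysis will need later.

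For the second claim, fix iteration $j$ and a vertex $w$ that survives to the end of step (d). Deletions only decrease degrees, so it suffices to show $d(w)\le D_j$ at the moment $w$ is reached. We split into two cases.
\begin{itemize}
  \item If $w\notin W_j$, then already $d_{H_j}(w)\le D_j$, and its final degree can only be smaller.
  \item If $w\in W_j$, the processing step reaches $w$ at some point. Either its degree is then at most $D_j$ and stays so afterwards, or it exceeds $D_j$. In the latter case $w$ is deleted, whether as a take, or as a pivot since $w$ lies in its own closed neighborhood. This contradicts the assumption that $w$ survives.
\end{itemize}
So every surviving vertex has final degree at most $D_j$, and the residual graph after iteration $j$ has maximum degree at most $D_j$.

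For the covering claim, consider each operation on the current graph.
\begin{itemize}
  \item A take deletes only $v$, so the removed edges are exactly those incident to $v$ in the current graph. Each is covered because $v$ is added to $S$.
  \item A pivot deletes $N[v]$ in the current graph. Any removed edge has an endpoint in $N[v]$. If that endpoint lies in $N(v)$, it was added to $S$. If the endpoint is $v$, the other endpoint is a current neighbor of $v$, so it lies in $N(v)\subseteq S$.
\end{itemize}
Hence every edge removed in an operation is covered by the vertices selected in that same operation.

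The only delicate point is that classifications are frozen while degrees keep changing during the iteration. The degree argument never uses the classification: a vertex that still has high degree when reached is deleted either way, so the frozen labels cause no difficulty.
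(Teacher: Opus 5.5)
Your proof is correct and follows the paper's argument. The base case comes from the choice of $J$, survivors end with degree at most $D_j$ because any vertex still above $D_j$ when reached is deleted, the bound is carried forward via $D_{j+1}=8D_j$, and takes and pivots are checked directly to select an endpoint of every edge they remove; your explicit case split on $w\in W_j$ and your remark about frozen classifications just spell out what the paper states more tersely.
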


\begin{proof}
  The initial bound follows from the choice of $J$.
  During an iteration $j$, degrees only decrease.
  If a vertex ends the iteration with degree greater than $D_j$, then it belonged to $W_j$ at the start of the iteration, survived until it was processed, and still had degree greater than $D_j$ at that time; the algorithm would then have deleted it, a contradiction.
  Since $D_j=8D_{j-1}$, the invariant follows by induction.
  A take makes the algorithm select the deleted vertex itself.
  A pivot makes the algorithm select all of its current neighbors before deleting the pivot and those neighbors.
  Thus each operation covers all edges incident to vertices it removes.
\end{proof}

The degree invariant controls the residual graphs, while the next lemma supplies the conditional independence needed to analyze the classifications made on those graphs.

\begin{lemma}
  \label{lem:iteration-law}
  For $0\le k\le J+1$, let $\mathcal F_k$ contain all signs from iterations $i\ge k$; in particular, $\mathcal F_{J+1}$ contains no signs.
  Fix an iteration $j\in\{0,\ldots,J\}$.
  Conditional on $\mathcal F_{j+1}$, the variables $(\widetilde X_v^{(j)})_{v\in V}$ are independent.
  Moreover, for every $v\in V$, the conditional expectation of $\widetilde X_v^{(j)}$ is
  \[
    \E[\widetilde X_v^{(j)}\mid\mathcal F_{j+1}]
    =\frac{\eta_j\cdot x_v^*}
    {1+x_v^*\cdot\sum_{i=j+1}^J\eta_i\cdot\widetilde X_v^{(i)}}.
  \]
  Furthermore, the denominator lies in $(1-\varepsilon,1+\varepsilon)$.
\end{lemma}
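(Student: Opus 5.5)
We work directly from the explicit joint law \eqref{eq:iteration-sign-law} of Lemma~\ref{lem:information-splitting}. The conditioning sigma-field $\mathcal F_{j+1}$ is generated by the vectors $(\widetilde X_v^{(i)})_{i>j}$ for $v\in V$. The construction is applied independently at each vertex, so the full vectors $\widetilde X_v$ are mutually independent across $v$. Hence conditional on $\mathcal F_{j+1}$ the pairs $(\widetilde X_v^{(0)},\dots,\widetilde X_v^{(j)})$ remain independent across $v$, and each depends only on $v$'s own later coordinates. This gives the independence claim once the single-vertex conditional law is identified. It is a finite discrete computation: conditioning on an atom of positive probability of the independent family factors vertex by vertex.

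For a fixed vertex $v$, fix a realization $y=(y^{(j+1)},\dots,y^{(J)})$ of the later coordinates. We marginalize \eqref{eq:iteration-sign-law} over the coordinates $0,\dots,j-1$. Summing $1+x_v^*\sum_i\eta_iY^{(i)}$ over those $2^j$ free coordinates kills every term with $i<j$, since each $Y^{(i)}$ sums to zero. This gives
\[
\Prb[\widetilde X_v^{(j)}=s,\ \widetilde X_v^{(>j)}=y]=2^{-(J-j+1)}\Bigl(1+x_v^*\eta_j s+x_v^*\sum_{i=j+1}^J\eta_i y^{(i)}\Bigr).
\]
Summing over $s\in\{\pm1\}$ gives the marginal $2^{-(J-j)}\bigl(1+x_v^*\sum_{i>j}\eta_iy^{(i)}\bigr)$. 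Dividing the two yields
\[
\Prb[\widetilde X_v^{(j)}=s\mid y]=\tfrac12\Bigl(1+\frac{x_v^*\eta_j s}{1+x_v^*\sum_{i>j}\eta_iy^{(i)}}\Bigr).
\]
The conditional expectation is therefore exactly the stated ratio.

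It remains to show the denominator lies in $(1-\varepsilon,1+\varepsilon)$; this also confirms that every atom has positive probability, so the conditioning is well defined. We bound $\bigl|\sum_{i=j+1}^J\eta_iy^{(i)}\bigr|\le\varepsilon\sum_{i\ge j+1}2^{-i-1}=\varepsilon(2^{-j-1}-2^{-J-1})$. This is strictly less than $\varepsilon$, since $2^{-j-1}\le 1/2$. In the case $j=J$ the sum is empty and the denominator is $1$. Since $\varepsilon\le1$, the denominator is positive.

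The only delicate point is bookkeeping: making sure the marginalization over lower-index coordinates is valid, which relies on the linearity of the law in each coordinate. Care is also needed to phrase conditional independence correctly as a product over vertices of the per-vertex conditional laws, each of which depends only on that vertex's own $\mathcal F_{j+1}$-coordinates.
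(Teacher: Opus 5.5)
Your proposal is correct and follows essentially the same route as the paper: marginalize the explicit law \eqref{eq:iteration-sign-law} over the coordinates $0,\dots,j-1$, divide the joint probability by the marginal of the later coordinates, and get conditional independence from the vertex-wise independence of the vectors $\widetilde X_v$. Two small points you make explicit that the paper leaves implicit are that every conditioning atom has positive probability, and that the geometric-sum bound actually places the denominator in $[1-\varepsilon/2,\,1+\varepsilon/2]$.
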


\begin{proof}
  Fix a vertex $v\in V$ and the values of all signs in $\mathcal F_{j+1}$.
  Let $f_v=(f_v^{(i)})_{i=j+1}^J$ be their values at $v$, and put $\mathcal S_{f_v}:=\sum_{i=j+1}^J\eta_i\cdot f_v^{(i)}$.
  The event that the earlier signs are $f_v$ and the current sign is $z\in\{\pm1\}$ leaves precisely the coordinates $0\le i<j$ unrestricted.
  Let $Z=(Z^{(i)})_{i=0}^{j-1}\in\{\pm1\}^j$ range over assignments to these coordinates.
  For each $0\le i<j$, pairing assignments that differ only at coordinate $i$ gives
  \[
    \sum_{Z\in\{\pm1\}^j}Z^{(i)}=0.
  \]
  Summing the distribution in Lemma~\ref{lem:information-splitting} over $Z$ therefore yields
  \begin{align*}
    \Prb[(\widetilde X_v^{(i)})_{i=j+1}^J=f_v \land \widetilde X_v^{(j)}=z]
    &=\sum_{Z\in\{\pm1\}^j}\Prb[\widetilde X_v=(Z,z,f_v)] \\
    &=2^{-(J+1)}\sum_{Z\in\{\pm1\}^j}
      \left(1+x_v^*\cdot\left(\mathcal S_{f_v}+\eta_j\cdot z+\sum_{i=0}^{j-1}\eta_i\cdot Z^{(i)}\right)\right) \\
    &=2^{-(J-j+1)}(1+x_v^*\cdot(\mathcal S_{f_v}+\eta_j\cdot z)).
  \end{align*}
  Summing over $z\in\{\pm1\}$ gives the probability that the earlier signs are $f_v$:
  \[
    \Prb[(\widetilde X_v^{(i)})_{i=j+1}^J=f_v]
    =2^{-(J-j)} (1+x_v^*\cdot \mathcal S_{f_v}).
  \]
  Since $(\widetilde X_u)_{u\in V}$ are independent, conditioning on $\mathcal F_{j+1}$ leaves $(\widetilde X_u^{(j)})_{u\in V}$ independent and makes the conditional distribution at $v$ depend only on $f_v$.
  The definition of conditional probability now gives:
  \begin{align*}
    \Prb[\widetilde X_v^{(j)}=z\mid\mathcal F_{j+1}]
    &=\Prb[\widetilde X_v^{(j)}=z\mid f_v] \\
    &=\frac{1+x_v^*\cdot(\mathcal S_{f_v}+\eta_j\cdot z)}
      {2(1+x_v^*\cdot \mathcal S_{f_v})}.
  \end{align*}
  Since $\widetilde X_v^{(j)}\in\{\pm1\}$, subtracting the $z=-1$ case from the $z=1$ case gives the claimed conditional mean.
  Finally, because $|\mathcal S_{f_v}|\le\sum_{i=j+1}^J\eta_i<\varepsilon$, the denominator has the stated bounds.

\end{proof}

\subsection{Approximation guarantee}

We now combine the structural and probabilistic invariants to analyze the general progress and balance quantities from Section~\ref{subsec:progress-balance}.

\begin{lemma}
  \label{lem:noisy-amortized}
  Suppose $0<\varepsilon\le\alpha/(4+\alpha)$.
  For the iterations in step~\ref{step:noisy-iterations} of Algorithm~\ref{alg:noisy}, instantiate the general \emph{progress} and \emph{balance} from Section~\ref{subsec:progress-balance} by letting $P^*$ count the vertices of $C^*$ deleted during those iterations, taking $A$ to be the set of vertices added to $S$ during those iterations, and setting $B^*:=|A|-P^*$.

  The expected value of $B^*$ is at most
  \[
    \frac13\E[P^*]
    +O\!
    \left(
    \Delta_{\mathrm{lab}}\cdot
    e^{-\frac{\alpha^2}{128}\cdot\varepsilon^2\cdot\Delta_{\mathrm{lab}}}
    \right)k^*.
  \]
\end{lemma}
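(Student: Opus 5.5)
We analyze each operation of step~\ref{step:noisy-iterations} separately and charge its balance to its progress. Write every operation as deleting a set $R$ and selecting a set $A_{\mathrm{op}}\subseteq R$. Its contribution to $B^*$ is
\[
  |A_{\mathrm{op}}|-|R\cap C^*|.
\]
\emph{Takes.} A take of $v\in C^*$ contributes $0$. A take of $v\in I^*$ contributes $+1$ with no progress.

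\emph{Pivots.} Suppose a pivot $v$ has current degree $d>D_j$. If $v\in I^*$, all its current neighbors lie in $C^*$, so $R\cap C^*=A_{\mathrm{op}}$ and the contribution is $0$. If $v\in C^*$, the operation selects $d$ vertices and gains progress $1+|N\cap C^*|$, where $N$ is the current neighborhood. The contribution is therefore $|N\cap I^*|-1-|N\cap C^*|$, which is positive only when more than half of the current neighbors lie in $I^*$.

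So the only bad events are: (i) a vertex of $I^*$ classified as a take, and (ii) a vertex of $C^*$ classified as a pivot whose current neighborhood is mostly in $I^*$.

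The plan is to show that, conditional on $\mathcal F_{j+1}$ (which determines $H_j$ and $W_j$), each bad classification has probability at most $e^{-\Omega(\alpha^2\varepsilon^2 D_j c_j^2\cdot\ldots)}$. By Lemma~\ref{lem:iteration-law}, the signs $\widetilde X_u^{(j)}$ are independent given $\mathcal F_{j+1}$, with conditional means $x_u^*\eta_j/(1\pm\varepsilon)$.

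\emph{Event (i).} Take $v\in I^*$ with $d=d_{H_j}(v)>D_j$. All its neighbors are in $C^*$, so $\E[T_v\mid\mathcal F_{j+1}]\ge d\,\eta_j/(1+\varepsilon)$. The hypothesis $\varepsilon\le\alpha/(4+\alpha)$ gives $1/(1+\varepsilon)-(1-\alpha/2)\ge\alpha/4$, so the threshold $\tau_j d$ sits a margin of at least $\tfrac{\alpha}{4}\eta_j d$ below the mean. Hoeffding then gives
\[
  \Prb[\text{$v$ is a take}]\le \exp\!\bigl(-\tfrac{\alpha^2}{32}\eta_j^2 d\bigr).
\]
Since $\eta_j^2 D_j=\varepsilon^2 4^{-j-1}8^j\Delta_{\mathrm{lab}}=\varepsilon^2 2^{j-2}\Delta_{\mathrm{lab}}$, this is at most $e^{-\frac{\alpha^2}{128}\varepsilon^2\Delta_{\mathrm{lab}}2^{j}}$.

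\emph{Event (ii).} Take $v\in C^*$ with $T_v\ge\tau_j d$. The classification is frozen at the start of the iteration, but the neighborhood at processing time is a subset of $N_{H_j}(v)$ and may differ. I would instead charge via the frozen neighborhood. When $v$ is processed as a pivot, its current degree is still more than $D_j\ge\Delta(H_j)/8$, so the current neighborhood is at least $1/8$ of the original. Hence if the current neighborhood is majority-$I^*$, then the original neighborhood has at least a $1/16$ fraction in $I^*$. I would handle this by splitting on the fraction $\beta$ of $N_{H_j}(v)$ lying in $I^*$.
\begin{itemize}
  \item If $\beta\ge 1/16$, the mean of $T_v$ is at most $(1-2\beta)d\eta_j(1+O(\varepsilon))$, which falls below $\tau_j d$ by $\Omega(\eta_j d)$ once $\alpha$ is small. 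Hoeffding again gives a bound of the same $e^{-\Omega(\ldots)}$ type.
  \item If $\beta<1/16$, the pivot is deterministically good.
\end{itemize}
This is the \textbf{main obstacle}: the constants must be arranged so that the bad pivot contribution is at most $\tfrac13$ of the progress rather than $0$. That is where the $\frac13\E[P^*]$ term comes from. The fraction argument above leaves slack of exactly this kind: a pivot whose current neighborhood has up to roughly a $2/3$ majority in $C^*$ still has balance at most $\tfrac13\cdot$progress. So I would set the ``bad'' threshold so that non-bad pivots satisfy contribution $\le\frac13$ progress, and bound only the exceptional ones probabilistically.

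\emph{Summing the bad events.} A bad event's contribution is at most $d\le 8D_j$. Each vertex of $I^*$ and each vertex of $C^*$ is in $W_j$ at most once in total, since it is deleted on its first operation. The expected excess over all iterations is therefore at most
\[
  \sum_{v}\sum_j 8D_j\,e^{-\frac{\alpha^2}{128}\varepsilon^2\Delta_{\mathrm{lab}}2^{j}}.
\]
The number of vertices involved must be bounded by $O(k^*)$. For $C^*$ this is immediate. For $I^*$-takes, a vertex of $I^*$ in $W_j$ has more than $D_j$ neighbors in $C^*$, so by double counting $|W_j\cap I^*|\le 8D_j|C^*\cap V(H_j)|/D_j=O(k^*)$ per level. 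Each level's count is multiplied by a probability decaying doubly exponentially in $j$ against the growth $D_j=8^j\Delta_{\mathrm{lab}}$. For $\Gamma$ large, the $j=0$ term dominates, giving $O(\Delta_{\mathrm{lab}}e^{-\frac{\alpha^2}{128}\varepsilon^2\Delta_{\mathrm{lab}}})k^*$. Combining this with the per-operation inequality $\text{contribution}\le\frac13\,\text{progress}$ on good events and taking expectations yields the lemma.
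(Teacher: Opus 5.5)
Your architecture matches the paper's. You condition on $\mathcal F_{j+1}$, account per operation, and bound the two bad classifications by Hoeffding with margin $\frac{\alpha}{4}\eta_j d$. You use the factor-$8$ degree invariant to pass from the frozen neighborhood to the current one, bound $|W_j|=O(k^*)$, and sum a geometric series over $j$. Your event (i) and the final summation are essentially the paper's.

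\textbf{The gap.} It lies in the step you flag as the main obstacle, and it starts with an arithmetic slip. For a pivot at $v\in C^*$ with current neighborhood $N$, the operation selects $|N|=d'$ vertices and deletes $1+|N\cap C^*|$ cover vertices. Its balance is therefore $d'-1-|N\cap C^*|=|N\cap I^*|-1=a'-1$; you subtracted $|N\cap C^*|$ twice. Consequences:
\begin{itemize}
\item The balance is positive as soon as $a'\ge2$, not only under an $I^*$-majority.
\item Your case split no longer closes. The case $\beta<1/16$ only yields $a'<d'/2$, and then $(a'-1)/(d'+1-a')$ can be close to $1$.
\item Your remark about a $2/3$ fraction of cover neighbours is also wrong. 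Balance at most $\frac13$ of progress means $a'-1\le\frac13(d'+1-a')$, i.e.\ $a'\le d'/4+1$.
\item Under your formula the split would produce no $\frac13\E[P^*]$ term at all. That is why your obstacle paragraph does not fit the rest of the argument.
\end{itemize}

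\textbf{The missing calibration.} Both cases must work at once, and both pin the split threshold at exactly $\alpha$:
\begin{itemize}
\item If $a<\alpha d$, then $a'\le a<8\alpha D_j<8\alpha d'$. Hence $B_v^*/P_v^*<8\alpha/(1-8\alpha)=\frac13$ for $\alpha=1/32$; this is where the $\frac13$ comes from.
\item If $a\ge\alpha d$, then $\mu_v<(1-2\alpha+\frac{\alpha}{4})\eta_j d$, while $\tau_j\ge(1-\frac{3\alpha}{2})\eta_j$. This gives margin $\frac{\alpha}{4}\eta_j d$ and the same exponent as in event (i).
\end{itemize}
The window is tight. Your Hoeffding side at $\beta\ge1/16$ is fine, but repairing the deterministic side forces the threshold down to at most $1/32$. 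Against the worst admissible $\tau_j=(1-\frac{3\alpha}{2})\eta_j$, keeping margin $\frac{\alpha}{4}\eta_j d$ (and hence the stated exponent $\frac{\alpha^2}{128}$) forces it to be at least $\alpha$. So ``once $\alpha$ is small'' is not an available move: $\alpha=1/32$ and $\tau_j$ are fixed by the algorithm.

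\textbf{A minor point.} A vertex that is skipped can reappear in a later $W_j$, so ``in $W_j$ at most once'' is false. Your sum over all pairs $(v,j)$, with $|W_j\cap I^*|\le 8k^*$ per level, does not actually need it.
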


\begin{proof}
  \Needspace{8\baselineskip}
  Fix an iteration $j$ and condition on $\mathcal F_{j+1}$.
  The graph $H_j$ and the set $W_j$ are then fixed.
  Let $P_j^*$ and $B_j^*$ be the progress and balance contributions of iteration $j$.
  To analyze one operation, fix a vertex $v\in W_j$.
  Write $d:=d_{H_j}(v)$ and $a:=|N_{H_j}(v)\cap I^*|$ for its counts at the start of the iteration.
  If an operation processes $v$, let $d'$ be the degree of $v$ at that time.
  Let $P_v^*$ count the vertices of $C^*$ deleted by the operation, and let $B_v^*$ be the number of vertices added to $S$ minus $P_v^*$.
  For a pivot operation applied to a cover vertex $v$, let $a'$ be the number of neighbors of $v$ in $I^*$ when $v$ is processed.

  If $v\in I^*$, a pivot operation has $(P_v^*,B_v^*)=(d',0)$, while a take has $(P_v^*,B_v^*)=(0,1)$.
  If $v\in C^*$, a take operation has $(P_v^*,B_v^*)=(1,0)$, and a pivot has
  \[
    (P_v^*,B_v^*)=(d'+1-a',a'-1).
    \tag{9}\label{eq:noisy-local-accounting}
  \]
  Every vertex included in $S$ by an operation is deleted in that operation, and later operations act only on surviving vertices.
  Thus $P_j^*$ and $B_j^*$ sum these local contributions within iteration $j$, while $P^*=\sum_{j=0}^J P_j^*$ and $B^*=\sum_{j=0}^J B_j^*$.

  Consider a vertex of $C^*$ with $a<\alpha d$ that is classified as a pivot.
  By Lemma~\ref{lem:noisy-invariant}, $d\le8D_j$, so $a<8\alpha D_j$.
  When $v$ is processed, $d\ge d'>D_j>0$ and $a'\le a$, hence $a'<8\alpha d'$.
  Equation~\eqref{eq:noisy-local-accounting} gives
  \[
    \frac{B_v^*}{P_v^*}
    =\frac{a'-1}{d'+1-a'}
    <\frac{8\alpha d'}{d'-8\alpha d'}
    =\frac13.
    \tag{10}\label{eq:low-exposure-ratio}
  \]
  A take operation at a cover vertex and a pivot operation at an $I^*$-vertex have zero balance.
  For the remaining cases, let $E_{j,v}$ be the event that $v\in I^*$ is classified as a take, or that $v\in C^*$ with $a\ge\alpha d$ is classified as a pivot.
  These classifications are frozen before any vertex of $W_j$ is processed.
  If $E_{j,v}$ occurs and $v$ is not skipped, a take operation has balance one, while a pivot operation has balance $a'-1\le d'-1 < d\le8D_j$.
  Since $v$ was arbitrary, the total balance in the current iteration satisfies
  \[
    B_j^*
    \le\frac13P_j^*
    +8D_j\sum_{v\in W_j}\mathbf1_{E_{j,v}}.
  \]
  The sum may count vertices that are later skipped, which only increases the upper bound.

  We next bound $\Prb(E_{j,v}\mid\mathcal F_{j+1})$ for the fixed vertex $v$.
  Put $\mu_v:=\E[T_v\mid\mathcal F_{j+1}]$.
  The $a$ neighbors of $v$ in $I^*$ have true sign $-1$, and its other $d-a$ neighbors have true sign $+1$, so their true signs sum to $d-2a$.
  For each $u\in N_{H_j}(v)$, Lemma~\ref{lem:iteration-law} implies that the absolute difference between $\E[\widetilde X_u^{(j)}\mid\mathcal F_{j+1}]$ and $\eta_j x_u^*$ is less than $\eta_j \cdot \varepsilon/(1-\varepsilon)$.
  Linearity of expectation and the triangle inequality give:
  \[
    \left|\mu_v-\eta_j(d-2a)\right|
    \le\sum_{u\in N_{H_j}(v)}
      \left|\E[\widetilde X_u^{(j)}\mid\mathcal F_{j+1}]-\eta_j x_u^*\right|
    <\sum_{u\in N_{H_j}(v)}\eta_j\cdot\frac{\varepsilon}{1-\varepsilon}
    \le\frac{\alpha}{4}\eta_j \cdot d.
  \]

  If $v\in I^*$, then $a=0$ as $I^*$ is independent.
  The bound $\mu_v>\eta_j(d-2a)-\frac{\alpha}{4}\eta_j\cdot d$, together with $\tau_j\le(1-\alpha/2)\eta_j$, gives
  \[
    \mu_v-\tau_j \cdot d
    >\left(1-\frac{\alpha}{4}\right)\eta_j \cdot d-\tau_j \cdot d
    \ge\left(\frac{\alpha}{2}-\frac{\alpha}{4}\right)\eta_j \cdot d
    =\frac{\alpha}{4}\eta_j \cdot d.
  \]

  If $v\in C^*$ and $a\ge\alpha d$, then $d-2a\le(1-2\alpha)d$.
  The bound $\mu_v<\eta_j(d-2a)+\frac{\alpha}{4}\eta_j\cdot d$, together with $\tau_j\ge(1-3\alpha/2)\eta_j$, gives
  \[
    \tau_j \cdot d-\mu_v
    >\tau_j \cdot d-\left(1-2\alpha+\frac{\alpha}{4}\right)\eta_j \cdot d
    \ge\left(\frac{\alpha}{2}-\frac{\alpha}{4}\right)\eta_j \cdot d
    =\frac{\alpha}{4}\eta_j \cdot d.
  \]
  In the first case $E_{j,v}$ is the event $T_v<\tau_j \cdot d$; in the second it is the event $T_v\ge\tau_j \cdot d$.
  Conditional on $\mathcal F_{j+1}$, the neighborhood $N_{H_j}(v)$ is fixed, and Lemma~\ref{lem:iteration-law} makes its $d$ current signs independent.

  \begin{samepage}
  Since each summand of $T_v$ lies in $[-1,1]$, Hoeffding's inequality~\cite{Hoeffding63} bounds the conditional probability of either event by
  \[
    e^{-\frac{2(\nicefrac{\alpha}{4} \cdot \eta_j \cdot d)^2}{4d}}
    =e^{-\frac{\alpha^2}{32}\varepsilon^2c_j^2 \cdot d}
    <e^{-\frac{\alpha^2}{32}\varepsilon^2c_j^2 \cdot D_j}
    =e^{-\frac{\alpha^2}{128}\varepsilon^2 \cdot 2^j \cdot \Delta_{\mathrm{lab}}}.
  \]
  The final equality uses $c_j^2 \cdot D_j=2^{j-2} \cdot \Delta_{\mathrm{lab}}$.
  \end{samepage}

  The surviving vertices of $C^*$ cover $H_j$ and are at most $k^*$ in total.
  Lemma~\ref{lem:noisy-invariant} therefore gives $|E(H_j)|\le8D_j \cdot k^*$.
  Since every vertex in $W_j$ has degree greater than $D_j$, we have
  \[
    D_j \cdot |W_j|
    \le\sum_{v\in W_j}d_{H_j}(v)
    \le2|E(H_j)|
    \le16D_j \cdot k^*,
  \]
  Hence $|W_j|<16k^*$: the first inequality is strict if $W_j$ is nonempty, and otherwise the conclusion follows from $k^*\ge1$.
  Taking conditional expectations in the per-iteration balance bound now yields
  \[
    \begin{aligned}
      \E[B_j^*\mid\mathcal F_{j+1}]
      &\le\frac13\E[P_j^*\mid\mathcal F_{j+1}]
          +8D_j\sum_{v\in W_j}\Prb(E_{j,v}\mid\mathcal F_{j+1})\\
      &<\frac13\E[P_j^*\mid\mathcal F_{j+1}]
          +128D_j \cdot k^* \cdot e^{-\frac{\alpha^2}{128}\varepsilon^2 \cdot 2^j \cdot \Delta_{\mathrm{lab}}}.
    \end{aligned}
  \]
  Taking expectations over $\mathcal F_{j+1}$ and summing over $j=0,\ldots,J$ gives the following bound, since $P^*=\sum_{j=0}^J P_j^*$ and $B^*=\sum_{j=0}^J B_j^*$ and expectation is linear:
  \[
    \E[B^*]
    <\frac13\E[P^*]
       +128k^* \cdot \Delta_{\mathrm{lab}}
        \sum_{j=0}^J8^j \cdot e^{-\frac{\alpha^2}{128}\varepsilon^2 \cdot 2^j \cdot \Delta_{\mathrm{lab}}}.
  \]

  We bound the sum by extending it to every $j\ge0$.
  Because $\Gamma>256/\alpha^2$, $\varepsilon\le\alpha/(4+\alpha)$, and $\alpha=1/32$, we have
  \[
    \frac{\alpha^2}{128}\varepsilon^2 \cdot \Delta_{\mathrm{lab}}
    \ge\frac{\alpha^2 \cdot \Gamma}{128}\log(\nicefrac{1}{\varepsilon})
    >2\log\!\left(\frac{4+\alpha}{\alpha}\right)
    >14.
  \]
  Since $2^j\ge j + 1$ for $j\ge0$ and $8e^{-14}<\nicefrac12$, we obtain
  \[
    \begin{aligned}
      \sum_{j=0}^J8^j \cdot e^{-\frac{\alpha^2}{128}\varepsilon^2 \cdot 2^j \cdot \Delta_{\mathrm{lab}}}
      &<e^{-\frac{\alpha^2}{128}\varepsilon^2 \cdot \Delta_{\mathrm{lab}}}
        \sum_{j\ge0}\left(8e^{-\frac{\alpha^2}{128}\varepsilon^2 \cdot \Delta_{\mathrm{lab}}}\right)^j\\
      &<2e^{-\frac{\alpha^2}{128}\varepsilon^2 \cdot \Delta_{\mathrm{lab}}}.
    \end{aligned}
  \]
  Substitution finally gives $\E[B^*]<\frac13\E[P^*]+256\Delta_{\mathrm{lab}}\cdot \exp\!\left(-\frac{\alpha^2}{128} \cdot \varepsilon^2 \cdot \Delta_{\mathrm{lab}}\right) k^*$, and the result follows.
\end{proof}

With the degree invariant, conditional distribution of signs, and balance estimate established, we now utilize them in the general accounting inequality.

\begin{proof}[Proof of Theorem~\ref{thm:noisy-upper}]
  By Lemma~\ref{lem:noisy-invariant}, the preprocessing satisfies the hypotheses of Lemma~\ref{lem:master-accounting} with threshold $\Delta_{\mathrm{lab}}$.
  Combining \eqref{eq:master-accounting} with Lemma~\ref{lem:noisy-amortized} gives
  \[
    \E[|\widehat C|]
    \le
    (2-\delta_{\Delta_{\mathrm{lab}}})\cdot k^*
    -\left(\frac23-\delta_{\Delta_{\mathrm{lab}}}\right)\cdot\E[P^*]
    +O\!
    \left(
    \Delta_{\mathrm{lab}}\cdot
    e^{-\frac{\alpha^2}{128}\cdot\varepsilon^2\cdot\Delta_{\mathrm{lab}}}
    \right)k^*.
    \tag{11}\label{eq:noisy-cover-accounting}
  \]
  Since $\nicefrac{2}{3}-\delta_{\Delta_{\mathrm{lab}}}>0$ for sufficiently small $\varepsilon$, the progress term is nonpositive and may be dropped for a uniform worst-case bound.
  The chosen $\Gamma$ satisfies $(\alpha^2 \cdot \Gamma/128)\cdot\log e>2$.
  Put $L:=\log(\nicefrac{1}{\varepsilon})$.
  Since $\Gamma$ is fixed, the definition of $\Delta_{\mathrm{lab}}$ gives $\Delta_{\mathrm{lab}}=\Theta(\nicefrac{L}{\varepsilon^2})$.
  Using this estimate and $e^{-tL}=\varepsilon^{t\cdot\log e}$ for $t>0$, we obtain
  \[
    \begin{aligned}
      \Delta_{\mathrm{lab}}\cdot
      e^{-\frac{\alpha^2}{128}\cdot\varepsilon^2\cdot\Delta_{\mathrm{lab}}}
      &=O\!\left(\frac{L}{\varepsilon^2} \cdot \varepsilon^{(\alpha^2 \cdot \Gamma/128)\cdot\log e}\right)\\
      &=O\!\left(L\cdot\varepsilon^{(\alpha^2 \cdot \Gamma/128)\cdot\log e-2}\right)\\
      &=o\!\left(\frac{\log L}{L}\right).
    \end{aligned}
    \tag{12}\label{eq:noisy-error-decay}
  \]
  Taking logarithms
  \[
    \begin{aligned}
      \log\Delta_{\mathrm{lab}}
      &=\log(\nicefrac{L}{\varepsilon^2})+O(1)=2L+O(\log L) \text{, and}\\
      \log\log\Delta_{\mathrm{lab}}
      &=\log\!\left(2L+O(\log L)\right)=\log L+\log\!\left(2+O\!\left(\frac{\log L}{L}\right)\right)=\log L+O(1).
    \end{aligned}
  \]
  Substituting in \eqref{eq:halperin_delta} yields
  \[
    \begin{aligned}
      \delta_{\Delta_{\mathrm{lab}}}
      &=(2-o(1))\frac{\log\log\Delta_{\mathrm{lab}}}{\log\Delta_{\mathrm{lab}}}\\
      &=(2-o(1))\frac{\log L+O(1)}{2L+O(\log L)}\\
      &=(1-o(1))\frac{\log\log(\nicefrac{1}{\varepsilon})}{\log(\nicefrac{1}{\varepsilon})}.
    \end{aligned}
    \tag{13}\label{eq:noisy-saving-asymptotics}
  \]
  By \eqref{eq:noisy-saving-asymptotics}, the bound in \eqref{eq:noisy-error-decay} is $o(\delta_{\Delta_{\mathrm{lab}}})$.
  Thus the additive error term in \eqref{eq:noisy-cover-accounting} is $o(\delta_{\Delta_{\mathrm{lab}}})k^*$ and cannot offset its $\delta_{\Delta_{\mathrm{lab}}}k^*$ saving.
  Combining \eqref{eq:noisy-cover-accounting} and \eqref{eq:noisy-saving-asymptotics} now proves the theorem.
\end{proof}

\section{Conclusion}

We gave randomized polynomial-time algorithms for minimum Vertex Cover under Partial Predictions and Noisy Predictions, with expected approximation guarantees below $2$ for every sufficiently small fixed advice parameter $\varepsilon>0$.
Theorems~\ref{thm:main-upper} and~\ref{thm:noisy-upper} give leading coefficients $2$ and $1$, respectively, for the improvement on the scale $\log\log(\nicefrac{1}{\varepsilon})/\log(\nicefrac{1}{\varepsilon})$ as $\varepsilon\to0$, and both algorithms always return a vertex cover.
Both analyses control the preprocessing cost relative to the number of optimum vertices it removes, preserving the leading improvement obtained by applying Halperin's approximation to the remaining bounded-degree graph.

For noisy predictions, our guarantee matches the leading asymptotic improvement implied by the analysis of Aamand, Chen, Gollapudi, Silwal, and Wu~\cite[Appendix~E, equation~(55)]{ACGSW25}, after a constant rescaling of the bias parameter.
Their model supplies a degree-$d$ vertex with $d$ independent predictions about its membership in a fixed optimum cover, whereas ours supplies one.
Our algorithm obtains this guarantee by using predictions about neighboring vertices and controlling the conditional distribution of the derived signs over successive residual graphs.

The analysis leaves several questions unresolved.
Can either asymptotic guarantee be improved, or can matching lower bounds be established in these advice models?
Can the noisy-predictions guarantee be retained under weaker independence assumptions?
Our current proof uses mutual independence to establish the conditional independence in Lemma~\ref{lem:iteration-law}; extending this analysis requires accounting for the dependence between the advice and the residual graphs selected during preprocessing.

\paragraph{AI methodologies.}

We used OpenAI Codex to assist with initial exploration of ideas, drafting and editing the exposition, and checking references.
The authors verified the correctness and originality of all content, including references.

\bibliographystyle{plain}
\begingroup
\small
\bibliography{references}
\endgroup

\end{document}